\documentclass[pdflatex,sn-mathphys-num]{sn-jnl}

\usepackage{graphicx}
\usepackage{multirow}
\usepackage{amsmath,amssymb,amsfonts}
\usepackage{amsthm}
\usepackage{mathrsfs}
\usepackage[title]{appendix}
\usepackage{xcolor}
\usepackage{textcomp}
\usepackage{manyfoot}
\usepackage{booktabs}
\usepackage{algorithm}
\usepackage{algorithmicx}
\usepackage{algpseudocode}
\usepackage{listings}
\usepackage{pgfplots}
\pgfplotsset{compat=1.18}

\theoremstyle{thmstyleone}
\newtheorem{theorem}{Theorem}
\newtheorem{proposition}[theorem]{Proposition}

\theoremstyle{thmstyletwo}

\newtheorem{remark}{Remark}

\theoremstyle{thmstylethree}

\newcommand{\ket}[1]{|#1\rangle}
\newcommand{\bra}[1]{\langle#1|}
\newcommand{\braket}[2]{\langle#1|#2\rangle}

\newcommand{\Feff}{F_{\text{eff}}}
\newcommand{\epsleaf}{\epsilon_{\mathrm{leaf}}}
\newcommand{\epstot}{\epsilon_{\mathrm{tot}}}

\begin{document}


\title[Adaptive Multi-Backend Near-Clifford Simulation]{Adaptive Multi-Backend Simulation of Near-Clifford Quantum Circuits via Spatial Stabilizer-Frame Partitioning}

\author*[1]{\fnm{H\'{e}ctor J.} \sur{Garc\'{i}a}}\email{hjgarcia@umich.edu}

\affil*[1]{\orgdiv{Electrical Engineering and Computer Science},
  \orgname{University of Michigan},
  \orgaddress{\city{Ann Arbor}, \state{MI} \postcode{48109}, \country{USA}}}

\abstract{We present an exact amplitude simulator for Clifford$+$T
quantum circuits that combines a Feynman path sum across a balanced
qubit bipartition with stabilizer-frame simulation on each half. The
construction extends prior stabilizer-based Schr\"odinger-Feynman
methods in three directions: recursive multilevel bipartition into a
binary tree, automatic fallback to dense state-vector simulation when
a leaf's stabilizer frame would exceed its memory ceiling, and a
cost-model-driven partition selector that replaces the standard
cut-count minimization heuristic.

We show cut-count minimization is an unreliable proxy in practice: a
globally cleaner partition can reduce cross-cut count yet
\emph{increase} wall-clock time, because it imbalances T-gate density
across halves and inflates per-half stabilizer-frame size. Our cost
model substitutes the stabilizer-frame bound $2^{w}$ for the dense
$2^{n}$ ceiling per side and explicitly models per-amplitude readout
cost; isolating that term uncovered a quadratic-asymptotic
inefficiency in the leaf simulator's end-of-path amplitude
extraction, fixed by replacing it with an existing
$O(F \cdot s \cdot n)$ single-amplitude inner product.

On a structured hierarchical $n{=}16$ benchmark the recursive
simulator beats monolithic stabilizer-frame simulation by $92\times$
to $17{,}645\times$, wins by $79\times$ per path against a dense
half-state-vector baseline under an identical cut, and beats a
production state-vector simulator end to end by up to $47.9\times$
(median ${\sim}5\times$). On adversarial random Clifford$+$T circuits
the dense state vector wins past a crossover near $n/2$ cross-cut
gates -- the regime the cost model identifies. The dominant cost, the
cross-cut Feynman sum, is embarrassingly parallel with constant
inter-worker communication, unlike recent matrix-product-state
stabilizer-tensor methods whose inner contraction loop is
sequential.}

\keywords{stabilizer simulation, stabilizer frames, near-Clifford
circuits, Schr\"odinger-Feynman simulation, classical simulation of
quantum circuits}

\maketitle


\section{Introduction}\label{sec:intro}

Stabilizer-frame
simulation~\cite{garcia2012quipu,garcia2015simulation,garcia2014inner}
extends the Gottesman--Knill formalism~\cite{gottesman1998heisenberg,
aaronson2004improved} to circuits with a small number of non-Clifford
gates by representing the state as a sum of phase-corrected stabilizer
states sharing a single tableau.  Clifford gates update the tableau
alone at cost $O(n^2)$; each T-gate \emph{cofactors} the frame,
potentially doubling its population.  When cofactors collapse, the
effective frame dimension stays well below the worst case $2^t$; on
many structured workloads the technique is both faster and more memory
efficient than state-vector simulation~\cite{boixo2018characterizing}
and stabilizer-rank
decomposition~\cite{bravyi2016improved,bravyi2018simulation}.

The wall comes when T-gates are too numerous for the merger to keep
up.  Once $\Feff \to 2^t$ the per-gate work dominates the qubit count
$n$, and the wall arrives at $t$ small relative to $n$.  No
representational trick within the frame itself can move it: the
exponential dependence is on the magic, not the entanglement.

To push beyond, we observe that T-gates in many near-Clifford
workloads are not distributed uniformly: fault-tolerant
gadgets~\cite{bravyi2018simulation}, hierarchical
algorithms~\cite{coppersmith1994approximate}, and most magic-state
sub-routines concentrate T-gates within identifiable sub-registers.
A spatial bipartition of the qubit set into halves $A$ and $B$ thus
splits the global T-count into two smaller per-half counts $T_A$ and
$T_B$, with each half representable by its own much smaller stabilizer
frame.  The cross-cut entangling Clifford gates are then summed as a
Feynman path
integral~\cite{markov2008simulating,pednault2017pareto}---a
partitioning technique introduced by Markov and Shi for tensor
networks and deployed at scale by qsim~\cite{qsim2020} for state-vector
leaves in the Sycamore verification.  The same partitioning composes
cleanly with stabilizer-frame leaves; the closest prior work in this
direction~\cite{huang2021feynman} sketches a related composition with
a stabilizer-projector contraction leaf but offers neither an
implementation nor empirical results.

Choosing the bipartition itself turns out to be a non-trivial design
decision.  The classical proxy---minimize the number of two-qubit
gates that cross the cut, e.g.\ via Kernighan--Lin~\cite{kl1970} or
spectral graph-partition~\cite{pothen1990partitioning}---is correct in
limit but fails in practice: a globally cleaner partition can shift
all T-gates onto one half and inflate that half's stabilizer-frame
population by far more than the cross-cut-count reduction saves.  An
algorithm that picks partitions without modelling per-half frame size
gives the stabilizer leaves configurations they cannot
compress, defeating the whole purpose of the cut.

\paragraph{Contribution.}  We extend stabilizer-frame simulation along
this spatial axis with both an algorithmic construction and a
cost-model that drives its automated tuning.  Specifically:

\begin{enumerate}
  \item We define an algorithm that bipartitions an $n$-qubit
        near-Clifford circuit, simulates each half with an independent
        stabilizer frame, and sums their per-amplitude products over
        the $2^d$ Feynman paths through the $d$ cross-cut Clifford
        gates.
  \item We extend the construction to a \emph{multilevel binary tree
        of bipartitions}, multiplying the leaf-cost compression at
        each level.
  \item We make leaf representation \emph{adaptive}: each leaf is
        evaluated with whichever of stabilizer-frame or dense
        state-vector simulation is predicted cheapest from its own
        T-count and qubit count.  This worst-case guarantee---the
        algorithm is never asymptotically slower than direct dense
        simulation, on any input---is what makes the cut safe to
        apply without prior knowledge of T-gate placement, and
        generalizes naturally to additional leaf representations.
  \item We make \emph{partition selection} cost-model-driven.  An
        operation-count predictor substitutes the stabilizer-frame
        bound $\Feff \le 2^w$ for the dense $2^n$ ceiling per side,
        tracks frame growth across T-gates that fire after the two
        halves are first merged, and includes an explicit end-of-path
        dense-state-vector materialization term that accounts for the
        slice-vs-inline asymmetry.  Two micro-benchmarked constants---per-event
        tensor-and-apply cost and per-op readout cost relative to gate
        apply---are measured directly on the target hardware, so no
        workload-tuned multiplier is required.  Candidate partitions
        from Kernighan--Lin, spectral bisection, and multi-start
        variants are scored jointly with per-cross-cut
        merge-or-decompose decisions.
  \item We analyze the parallel structure: the algorithm has three
        nested embarrassingly parallel axes (Feynman paths, per-path
        leaves, intra-frame work) with $O(1)$ inter-worker
        communication.
  \item We report empirical results on two workload classes.  On a
        structured hierarchical $n{=}16$ benchmark ($81$ measurements
        over $27$ cross-cut-budget cells) the recursive simulator beats
        monolithic stabilizer-frame simulation by $92\times$ to
        $17{,}645\times$ and qsim~\cite{qsim2020} by $1.3\times$ to
        $47.9\times$ (median ${\sim}5\times$); isolating the per-path
        leaf cost against qsimh gives a $79\times$ per-path advantage.
        On adversarial random Clifford$+$T circuits the dense state
        vector wins past a crossover near $n/2$ cross-cut gates, the
        regime the cost model identifies.
\end{enumerate}

\paragraph{Outline.}  Section~\ref{sec:background} reviews the
stabilizer-frame simulator and the partitioning structure we adopt
from state-vector simulation.  Section~\ref{sec:algorithm} specifies
the algorithm, its multilevel form, the adaptive leaf rule, and the
cost-model-driven partition selector.  Section~\ref{sec:cost} states
four propositions characterizing its cost---including a formal
statement of the $\Feff$-aware predictor and the two-tier calibration
that turns it into a wall-clock predictor.  Section~\ref{sec:parallel}
analyzes the parallel structure.  Section~\ref{sec:eval} reports
empirical results, including the discrepancy between cross-cut count
and wall-clock time that motivates the cost model.
Section~\ref{sec:discussion} discusses limitations, related
approaches, and future work.


\section{Background}\label{sec:background}

This section reviews the two ingredients Quipu-Cut combines: the
stabilizer-frame simulator that evaluates each leaf
(Section~\ref{sec:stabframe}), and the hybrid Schr\"odinger--Feynman
partitioning, adopted from state-vector simulation, that joins the
leaves over a Feynman path sum.

\subsection{Stabilizer-frame simulation}\label{sec:stabframe}

A stabilizer state on $n$ qubits is a joint $+1$ eigenstate of $n$
commuting Pauli operators, the \emph{stabilizer
group}~\cite{gottesman1998heisenberg}.  It is determined by an
$n \times (2n)$ binary tableau and admits an $O(n^2)$ Clifford-gate
update rule~\cite{aaronson2004improved}.  Stabilizer states span the
Clifford-reachable orbit of the computational basis but cannot
represent arbitrary superpositions.

A \emph{stabilizer frame}~\cite{garcia2015simulation} encodes an
arbitrary quantum state $\ket{\psi}$ as
\begin{equation}
  \ket{\psi} \;=\; \sum_{i=1}^{F} c_i \, P_i \ket{\psi_S}
  \label{eq:stabframe}
\end{equation}
where $\ket{\psi_S}$ is a single shared stabilizer state encoded by an
$n \times 2n$ tableau, $\{P_i\}$ is a set of Pauli ``correction''
operators encoded as phase-vector/ amplitude pairs, and the $c_i$ are
complex coefficients.  Clifford gates act on $\ket{\psi_S}$ alone via
the Aaronson--Gottesman update, costing $O(n^2)$ regardless of $F$.
T~gates fail to be Clifford and \emph{cofactor} the frame: each frame
element splits into two new elements (one per eigenvalue branch),
potentially doubling $F$.  A merge pass collapses elements whose
correction operators coincide up to phase, and the surviving
\emph{effective} frame dimension $\Feff$ can be far smaller than the
worst-case $2^t$.

For purposes of this paper, the salient facts about
stabilizer frames are three:
\begin{itemize}
  \item Single-amplitude evaluation $\braket{x}{\psi}$ costs
        $O(F \cdot n^2)$, dominated by the $F$ stabilizer-state
        inner products~\cite{garcia2014inner}.
  \item Frame growth from $t$ T-gates is bounded by $2^t$, with
        $\Feff \ll 2^t$ on structured circuits but $\Feff = 2^t$
        in adversarial cases (random T-gate placement).
  \item Two independent stabilizer frames on disjoint qubit subsets
        compose via tensor product without coupling their tableaus,
        making the frame interface a natural leaf representation for
        a bipartition-based hybrid.
\end{itemize}

\subsection{Hybrid Schr\"odinger--Feynman simulation}
\label{sec:hsf}

Markov and Shi~\cite{markov2008simulating} introduced a hybrid
in which the $n$ qubits are bipartitioned into sets $A$ and $B$
($|A|+|B| = n$).  Clifford and non-Clifford gates that act only on $A$
or only on $B$ are applied to the respective half-simulators.  Gates
that span the cut---we call these \emph{cross-cut gates}---cannot be
applied directly.  Following Markov--Shi, a two-qubit gate $U_{AB}$
spanning the cut is decomposed into a sum of separable terms,
\begin{equation}
  U_{AB} \;=\; \sum_{j=1}^{r} U_A^{(j)} \otimes U_B^{(j)},
  \label{eq:gate-decomp}
\end{equation}
where $r$ is the Schmidt rank of $U_{AB}$.  For the controlled-Z gate
$r=2$.  Choosing one term per cross-cut gate yields a \emph{Feynman
path} $\alpha = (\alpha_1, \ldots, \alpha_d) \in \{1,\ldots,r\}^d$
through the $d$ cross-cut gates; the global amplitude is the sum
\begin{equation}
  \braket{x}{U \ket{0^n}} \;=\; \sum_{\alpha} \, \mathrm{phase}(\alpha)
    \cdot \braket{x_A}{U_A^{(\alpha)} \ket{0^{|A|}}}
    \cdot \braket{x_B}{U_B^{(\alpha)} \ket{0^{|B|}}}.
  \label{eq:feynman-sum}
\end{equation}
For two-qubit CZ-style cross-cut gates and rank $r=2$, the path space
has size $2^d$ and the per-path work is two independent half-circuit
simulations.

The hybrid as implemented in qsim~\cite{qsim2020} uses dense
state-vector simulation on each half, with total cost
$O(2^d \cdot (2^{|A|} + 2^{|B|}))$.  This is the form that scaled to
the Sycamore verification.  We adopt the same partitioning structure
but replace the per-half oracle with a stabilizer-frame simulator;
the cross-cut sum~\eqref{eq:feynman-sum} is unchanged.


\section{The Quipu-Cut Algorithm}\label{sec:algorithm}

We call the construction described in this section \emph{Quipu-Cut},
following the naming convention of Quipu~\cite{garcia2012quipu}, the
stabilizer-frame simulator on top of which the cut is built.

\subsection{Single-level quipu-cut}\label{sec:singlecut}

Let $C$ be an $n$-qubit circuit consisting of Clifford gates and
T~gates, $x \in \{0,1\}^n$ a target computational-basis state,
and $\pi \subseteq \{1,\ldots,n\}$ a chosen partition placing
$|A| = n_A$ qubits in side $A$ and the remaining $n_B = n - n_A$ in
side $B$.  We separate the gates of $C$ into three classes by support:
$C_A$ (touch $A$ only), $C_B$ (touch $B$ only), and $C_{AB}$ (span
the cut).  Let $d := |C_{AB}|$.  T~gates are single-qubit and hence
never span the cut.  Assume each cross-cut gate is a CZ, CNOT, or
SWAP and admits a rank-$2$ decomposition of the
form~\eqref{eq:gate-decomp}.

The single-level \textsc{quipu-cut} amplitude algorithm is
Algorithm~\ref{alg:quipu-cut}.

\begin{algorithm}[t]
\caption{\textsc{quipu-cut} single-level amplitude}
\label{alg:quipu-cut}
\begin{algorithmic}[1]
\Function{Amplitude}{$C$, $\pi$, $x$}
  \State partition $x = (x_A, x_B)$ along $\pi$
  \State let $C_A, C_B, C_{AB}$ be the gate partition under $\pi$
  \State $T \gets 0$
  \For{$\alpha \in \{0,1\}^{|C_{AB}|}$}
    \State $\hat C_A, \hat C_B \gets $ split each $g \in C_{AB}$ by $\alpha_g$, append to side
    \State $\mathrm{ph}_\alpha \gets $ accumulated $1/\sqrt{2}$ and sign factors from the projections
    \State $a_A \gets $ \textsc{LeafAmplitude}$(\hat C_A \cdot C_A, x_A)$
    \State $a_B \gets $ \textsc{LeafAmplitude}$(\hat C_B \cdot C_B, x_B)$
    \State $T \gets T + \mathrm{ph}_\alpha \cdot a_A \cdot a_B$
  \EndFor
  \State \Return $T$
\EndFunction
\end{algorithmic}
\end{algorithm}

The leaf oracle \textsc{LeafAmplitude} computes the single-amplitude
$\braket{y}{U\ket{0^m}}$ for an $m$-qubit Clifford+T circuit $U$ and
a target string $y$.  We instantiate it as the stabilizer-frame
simulator of~\cite{garcia2015simulation}, with the adaptive
representation described in Section~\ref{sec:densefallback}.

\paragraph{Schmidt decomposition of cross-cut Clifford gates.}
Algorithm~\ref{alg:quipu-cut} relies on a rank-2 decomposition of each
cross-cut gate.  For CZ acting on qubit $a \in A$ and qubit $b \in B$:
\[
  \mathrm{CZ}_{ab} \;=\; \tfrac{1}{2}(I+Z_a)\otimes I_b
                       + \tfrac{1}{2}(I-Z_a)\otimes Z_b
              \;=\; \ket{0}\bra{0}_a \otimes I_b + \ket{1}\bra{1}_a \otimes Z_b
\]
which is a sum over $\alpha \in \{0,1\}$ of $\ket{\alpha}\bra{\alpha}_a$
on side $A$ and $Z_b^\alpha$ on side $B$.  CNOT and SWAP have analogous
rank-2 decompositions.  The $\ket{\alpha}\bra{\alpha}_a$ projection on
side $A$ is implemented as a measurement-and-postselect step; when the
projected qubit's pre-projection state is not already in the
$\alpha$ eigenspace, a $1/\sqrt{2}$ magnitude factor is collected into
$\mathrm{ph}_\alpha$.  An eager one-pass simulation detects these
factors before descending the tree.

\paragraph{Choice of partition.}  Any $\pi$ is valid for correctness.
For performance, $\pi$ should minimize $d$ (the cross-cut count)
because each cross-cut gate doubles the path-space size.  We use a
greedy Kernighan--Lin~\cite{kl1970} bisection on the circuit's gate
graph: edges weighted by gate count, balance constraint
$|A| \approx |B|$.  KL takes $O(n^3)$ time and produces partitions
within a factor of $2$ of optimal in our experiments.  A small set of
domain-specific overrides (place sequential T-block circuits on
contiguous qubits) keeps the bisection deterministic on structured
inputs.

\subsection{Multilevel quipu-cut}\label{sec:multilevel}

The single-level cut leaves stabilizer frames on $n/2$ qubits with up
to $2^{T_A}$ or $2^{T_B}$ frame elements each.  When the per-half
T-count is itself moderate, the leaves remain expensive.  We apply the
same construction at multiple levels: each leaf simulator is itself
replaced by a quipu-cut on its own bipartition, until either (a) a
leaf is small enough for direct simulation to be cheap, or (b) the
leaf representation switches to dense state vector.

A \emph{multilevel quipu-cut tree} is a binary tree whose internal
nodes each carry a partition of their local qubit set and whose
leaves carry a sorted list of owned qubits.  Each internal node sums
over the Feynman paths of its local cross-cut gates, with the per-path
amplitude given by the values returned from its two children.

\begin{algorithm}[t]
\caption{\textsc{quipu-cut-rec} multilevel amplitude}
\label{alg:rec}
\begin{algorithmic}[1]
\Function{RecAmplitude}{$C$, $\mathrm{tree}$, $x$}
  \If{$\mathrm{tree}.\mathit{is\_leaf}$}
    \State \Return \textsc{LeafAmplitude}$(C, x)$
  \EndIf
  \State $\pi \gets $ partition stored at $\mathrm{tree}$
  \State $C_A, C_B, C_{AB} \gets $ classify gates of $C$ by $\pi$
  \State partition $x = (x_A, x_B)$ along $\pi$
  \State $T \gets 0$
  \For{$\alpha \in \{0,1\}^{|C_{AB}|}$}
    \State $\hat C_A, \hat C_B, \mathrm{ph}_\alpha \gets $ split as in Alg.~\ref{alg:quipu-cut}
    \State $a_A \gets $ \textsc{RecAmplitude}$(\hat C_A \cdot C_A, \mathrm{tree.left}, x_A)$
    \State $a_B \gets $ \textsc{RecAmplitude}$(\hat C_B \cdot C_B, \mathrm{tree.right}, x_B)$
    \State $T \gets T + \mathrm{ph}_\alpha \cdot a_A \cdot a_B$
  \EndFor
  \State \Return $T$
\EndFunction
\end{algorithmic}
\end{algorithm}

\paragraph{Magnitude-at-LEAF rule.}  A subtle correctness issue in
the multilevel form concerns the placement of the $1/\sqrt{2}$
magnitude factors that arise from cross-cut projections.  In the
single-level algorithm, those factors are accumulated into
$\mathrm{ph}_\alpha$ at the cut level.  When the construction is
nested, the same factors must be accumulated \emph{only at the
deepest leaf level}: applying them at every internal node would
double-count, producing amplitudes off by factors of $2^{-d/2}$ at
depth $d$.  Our implementation enforces this by deferring magnitude
tracking through the levels of the tree and applying it only when
the leaf oracle returns.  See Appendix~\ref{app:magnitude-proof} for
the formal argument.

\subsection{Adaptive leaf representation}\label{sec:densefallback}

The leaf oracle is not committed to one simulator.  Four leaf
backends are available, each optimal in a disjoint region of the
per-leaf $(m, T_{\mathrm{leaf}})$ plane and summarised in
Table~\ref{tab:leafbackends}: a stabilizer frame
(Section~\ref{sec:stabframe}), a dense state vector, a Clifford
tableau paired with a bond-$\chi$ matrix product state on the
destabilizer basis~\cite{masotllima2024stabilizer}, and a
Bravyi--Gosset low-rank stabilizer decomposition~\cite{bravyi2018simulation}.
The first two are exact; the last two trade a bounded, quantified
error for cost, so admitting them requires a per-leaf error
tolerance $\epsleaf$ --- introduced formally in
Section~\ref{sec:errorbudget} --- against which each candidate's
error must clear before its cost is even considered.

\begin{table}[h]
\centering
\small
\caption{The four leaf backends.  $m = m_{\mathrm{leaf}}$ is the leaf
qubit count, $T = T_{\mathrm{leaf}}$ its T-count, $\chi$ the MPS
bond, $h$ the Bravyi--Gosset Hamming-weight cap, and $G$ the leaf
gate count.  Frame and dense are exact; MPS and rank trade a
bounded, budgeted error for cost.}
\label{tab:leafbackends}
\setlength{\tabcolsep}{5pt}
\begin{tabular}{@{}llll@{}}
\toprule
Backend & Representation & Time & Error \\
\midrule
frame & stabilizer frame (Eq.~\ref{eq:stabframe})
  & $O(G \cdot F \cdot m^2)$ & $0$ \\[2pt]
dense & dense state vector
  & $O(G \cdot 2^m)$ & $0$ \\[2pt]
mps & Clifford tableau $+$ bond-$\chi$ MPS
  & $O(G \cdot m \cdot \chi^2)$ & discarded SVD norm \\[2pt]
rank & Bravyi--Gosset stabilizer rank
  & $O\!\big(\sum_{k\le h}\tbinom{T}{k}\, m^2\big)$ & binomial tail \\
\bottomrule
\end{tabular}
\end{table}

Rather than commit a leaf to one representation in advance, we
\emph{adaptively} pick the cheapest backend whose predicted error
fits $\epsleaf$.  A one-pass structural scan classifies each leaf
before simulation: leaves whose T-gates concentrate on few qubits
route to frame; leaves whose two-qubit interaction graph has small
cutwidth on a natural qubit ordering route to mps at whatever bond
$\chi$ the budget affords; leaves with moderate T-count route to
rank at the Hamming-weight cap $h$ the binomial tail allows; and
leaves matching none of these structural predicates, or exceeding
the $m \le 30$ state-vector ceiling in memory alone, route to dense.
In the borderline regime --- a leaf classified for frame or mps
whose cost grows faster than the structural scan predicted --- a
runtime check during simulation switches representation
mid-computation, transferring the current state and resuming under
the new backend from there.

The leaf-representation choice is the mechanism behind
Proposition~\ref{prop:never-worse}: the algorithm is
\emph{never asymptotically worse} than direct dense simulation, on
any input, regardless of partition choice.  Adding mps and rank as
further candidates cannot weaken this guarantee --- dense remains
available in the selector's candidate set exactly as before, so the
worst case is unchanged; the two extra backends can only ever
improve on it.  Without adaptive selection at all, a multilevel
quipu-cut tree with adversarial T-placement at a leaf would run
indefinitely on what dense simulation handles in seconds.


\section{Cost Analysis}\label{sec:cost}

This section states one theorem and four propositions characterizing
the cost structure.  The theorem is genuinely load-bearing --- it is
what makes admitting the two approximate backends of
Table~\ref{tab:leafbackends} principled rather than ad hoc; the
propositions characterize the algorithm's runtime given whichever
per-leaf cost the theorem's accounting allows.  Proofs are direct
from the algorithm specifications; complete proofs appear in the
appendices.

\subsection{Error-budget composition}\label{sec:errorbudget}

Admitting mps and rank leaves means some leaves are simulated with
controlled, nonzero error.  Let $\epstot$ be a user-facing amplitude
tolerance.  Each leaf's amplitude $a_\ell$ is computed with error
$|\hat a_\ell - a_\ell| \le \epsleaf$, and the reconstructed total is
$\hat a_{\mathrm{tot}} = \sum_\ell \mathrm{ph}_\ell \, \hat a_\ell$
where $\mathrm{ph}_\ell \in \{\pm 1, \pm i\}$ are the cross-cut path
phases of Algorithm~\ref{alg:rec} (the $\mathrm{ph}_\alpha$ of
Section~\ref{sec:singlecut}, one per Feynman path).

\begin{theorem}[Error-budget composition]\label{thm:errorbudget}
Let a quipu-cut tree have $N$ leaves.  Then
\[
  |\hat a_{\mathrm{tot}} - a_{\mathrm{tot}}| \;\le\; \epstot
  \quad\text{when}\quad
  \epsleaf \;=\; \frac{\epstot}{\sqrt{2N}},
\]
for \emph{any} assignment of leaves to backends --- the bound does
not depend on which backend is chosen where, only on each leaf
individually meeting $\epsleaf$.
\end{theorem}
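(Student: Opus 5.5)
The plan is to write the total error as a phased sum of per-leaf errors and then bound it in an $\ell_2$ (root-sum-square) sense rather than by the plain triangle inequality. Set $\delta_\ell := \hat a_\ell - a_\ell$, so $|\delta_\ell| \le \epsleaf$. By linearity of the reconstruction in Algorithm~\ref{alg:rec},
\[
  \hat a_{\mathrm{tot}} - a_{\mathrm{tot}} \;=\; \sum_{\ell=1}^{N} \mathrm{ph}_\ell\, \delta_\ell .
\]
To justify this display, I first check that the magnitude-at-leaf rule of Section~\ref{sec:multilevel} (Appendix~\ref{app:magnitude-proof}) puts every $1/\sqrt{2}$ factor inside the leaf value itself. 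This leaves each $\mathrm{ph}_\ell$ a pure unit phase in $\{\pm1,\pm i\}$. In particular no internal node rescales a child's error, and the backend assignment never enters the expression beyond the bound $|\delta_\ell|\le\epsleaf$. That already gives the "any assignment" clause for free.

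The key step is to avoid the naive estimate $\sum_\ell |\delta_\ell| \le N\epsleaf$, which would force $\epsleaf = \epstot/N$, and instead obtain a $\sqrt{N}$ scaling. I would split each unit-phase term into its real and imaginary parts. Because $\mathrm{ph}_\ell \in \{\pm1,\pm i\}$, each $\mathrm{ph}_\ell\delta_\ell$ simply permutes and negates $(\mathrm{Re}\,\delta_\ell, \mathrm{Im}\,\delta_\ell)$. The error therefore decomposes into two real sums, each with $N$ terms bounded by $\epsleaf$. Next I would argue, for each real sum separately, an $\ell_2$ bound of the form $\bigl(\sum_\ell (\cdot)^2\bigr)^{1/2} \le \sqrt{N}\,\epsleaf$. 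Combining the two components by Pythagoras then gives
\[
  |\hat a_{\mathrm{tot}} - a_{\mathrm{tot}}| \;\le\; \sqrt{2N}\,\epsleaf \;=\; \epstot .
\]
This is exactly where the $\sqrt{2N}$ in the statement comes from.

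The main obstacle is justifying that $\ell_2$ step, since it is false for adversarially aligned worst-case errors. The triangle inequality is tight when all $\delta_\ell$ share a phase. I would first try to take the leaf error semantics from Table~\ref{tab:leafbackends} literally: the mps error is a discarded SVD norm and the rank error is a binomial-tail norm. Both are norms of residual vectors orthogonal to the retained part. I would then try to show that the residuals of distinct leaves in the tree are mutually orthogonal when lifted to the full Hilbert space, because they sit on different Feynman-path projector branches $\ket{\alpha}\bra{\alpha}$. The amplitude error is then an inner product of $\bra{x}$ with an orthogonal sum, and Cauchy--Schwarz yields the root-sum-square bound. If this orthogonality fails, the fallback is to state the theorem as a root-mean-square bound under independent, zero-mean leaf errors, where the variances add. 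Otherwise I would weaken the bound to $\epsleaf = \epstot/N$ and note the discrepancy explicitly.
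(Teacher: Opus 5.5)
You correctly single out the $\ell_2$ step as the crux. It is not merely hard to justify, though: it is false under the theorem's own hypotheses, so no argument along your main line, or any other, can supply it.

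Bounding $\bigl(\sum_\ell x_\ell^2\bigr)^{1/2}\le\sqrt N\,\epsleaf$ for the real (or imaginary) parts controls the Euclidean norm of the vector of errors. It does not control the modulus of their sum, which is bounded only by $\sqrt N$ times that norm, i.e.\ by $N\epsleaf$. The real/imaginary split changes nothing, since each real sum can itself reach $N\epsleaf$. (The two vector norms in fact combine to $(\sum_\ell|\delta_\ell|^2)^{1/2}\le\sqrt N\,\epsleaf$, so the factor $2$ is an artifact of the split.)

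Concretely, take $\delta_\ell=\epsleaf\,\overline{\mathrm{ph}_\ell}$ for every $\ell$. This meets each per-leaf hypothesis under any backend assignment, and gives $|\hat a_{\mathrm{tot}}-a_{\mathrm{tot}}|=N\epsleaf=\sqrt{N/2}\,\epstot>\epstot$ as soon as $N>2$.

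Your orthogonality rescue fails too, for two reasons. First, Feynman branches are not orthogonal once gates act between cross-cut gates. For $H_a$, $\mathrm{CZ}_{ab}$, $H_a$, $\mathrm{CZ}_{ab}$ applied to $\ket{00}$, the branch vectors for $\alpha=(0,0)$ and $\alpha=(1,0)$ are both equal to $\tfrac12\ket{00}$. Second, $\epsleaf$ bounds scalar amplitude errors, not residual vectors.

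Nor does the magnitude-at-leaf rule make the reconstruction linear. Algorithm~\ref{alg:rec} accumulates products $a_Aa_B$ and re-evaluates each tree leaf once per Feynman path through its ancestors. A leaf error therefore enters as $\delta_Aa_B+\delta_A\delta_B$, and the number of error terms is the number of leaf evaluations, not $N$. The linear $N$-term form is a modelling assumption of the paper's setup, not something you can derive.

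Your fallbacks are therefore the correct outcome. Deterministically one needs $\epsleaf=\epstot/N$. A $\sqrt N$-type budget needs an extra assumption, e.g.\ independent zero-mean leaf errors. Under that assumption $\mathbb{E}\,|\sum_\ell\mathrm{ph}_\ell\delta_\ell|^2=\sum_\ell\mathbb{E}\,|\delta_\ell|^2\le N\epsleaf^2$, so the stated $\epsleaf$ does give root-mean-square error at most $\epstot/\sqrt2$, and a high-probability version follows by concentration.

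The paper's own proof does not get around this either:
\begin{itemize}
\item Its Cauchy--Schwarz step yields exactly the triangle-inequality bound $N\epsleaf$.
\item The claimed sharpening to $N\epsleaf/\sqrt2$, by splitting the leaves between the two root halves, tacitly uses $|S_1+S_2|^2\le|S_1|^2+|S_2|^2$. That fails when the two partial sums are aligned.
\item Even granting it, substituting $\epsleaf=\epstot/\sqrt{2N}$ leaves $\tfrac{\sqrt N}{2}\,\epstot$, which exceeds $\epstot$ for $N>4$.
\end{itemize}
So the discrepancy you propose to note is genuine. It should be recorded as a correction to the statement rather than argued around.
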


\begin{proof}
By linearity, $\hat a_{\mathrm{tot}} - a_{\mathrm{tot}} =
\sum_\ell \mathrm{ph}_\ell \, \delta_\ell$ with $\delta_\ell = \hat
a_\ell - a_\ell$, $|\delta_\ell| \le \epsleaf$.  Cauchy--Schwarz with
$|\mathrm{ph}_\ell| = 1$ gives $|\sum_\ell \mathrm{ph}_\ell
\delta_\ell|^2 \le N \sum_\ell |\delta_\ell|^2 \le N^2 \epsleaf^2$.
The cross-cut Feynman sum (Equation~\ref{eq:feynman-sum}) factors
over the two halves of each bipartition, so grouping the $N$ leaves
into the two halves at the root and applying the bound to each
independently gives the sharper $|\hat a_{\mathrm{tot}} -
a_{\mathrm{tot}}|^2 \le \tfrac{N^2}{2}\epsleaf^2$.  Substituting
$\epsleaf = \epstot/\sqrt{2N}$ gives the claim.  When a leaf uses an
exact backend, $\delta_\ell = 0$ and the bound only tightens.
\end{proof}

\begin{remark}
The bound is worst-case: it is tight only when the $\delta_\ell$
align in phase, which is not the typical regime on the structured
and adversarial workloads of Section~\ref{sec:eval} --- signed,
uncorrelated per-leaf errors realise closer to $\sqrt{N}\,\epsleaf$
than $N\,\epsleaf$.  We keep the worst-case $\sqrt{2N}$ scaling
because it is what the selector can actually guarantee without
assuming anything about how leaf errors happen to be distributed.
\end{remark}

Each backend's cost and error are as given in
Table~\ref{tab:leafbackends}; write $C_B(\ell)$ and $E_B(\ell)$ for
backend $B$'s cost and error on leaf $\ell$.  The adaptive selector
of Section~\ref{sec:densefallback} realises
\[
  \mathrm{Select}(\ell) \;=\; \operatorname*{arg\,min}_{B}
    \big\{\, C_B(\ell) \;:\; E_B(\ell) \le \epsleaf \,\big\},
\]
cheapest backend whose error fits the leaf's share of the budget.
This is the same accounting used throughout the rest of this
section; the four propositions below characterise
$C_{\mathrm{Select}}$ and are agnostic to which backend the
selector actually picks at a given leaf.

\begin{proposition}[Single-level cost]\label{prop:single-cost}
Let $C$ be an $n$-qubit circuit with $d$ rank-2 cross-cut gates under
partition $\pi$, leaf gate counts $G_A, G_B$, and per-leaf costs
$\mathrm{cost}_{\mathrm{leaf}}(A), \mathrm{cost}_{\mathrm{leaf}}(B)$
realised by $\mathrm{Select}$ (Section~\ref{sec:errorbudget}) on each
side, each within the leaf's error budget $\epsleaf$.  Then
Algorithm~\ref{alg:quipu-cut} computes any single amplitude, to
accuracy $\epsleaf$ per leaf, in time
\[
  O\!\left( 2^d \cdot \left[ \mathrm{cost}_{\mathrm{leaf}}(A)
                            + \mathrm{cost}_{\mathrm{leaf}}(B)
                            \right] \right).
\]
\end{proposition}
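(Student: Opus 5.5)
The plan is to read the bound directly off the loop structure of Algorithm~\ref{alg:quipu-cut}, splitting the cost into preprocessing, per-iteration work, and the count of iterations. Correctness of the returned value is not at issue here: for exact leaves it is the Feynman sum~\eqref{eq:feynman-sum}, and for approximate leaves the accuracy claim is simply the hypothesis that each leaf call meets $\epsleaf$. The propagation of those per-leaf errors to the total is deferred to Theorem~\ref{thm:errorbudget}. So the proposition reduces to a time bound.

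\emph{Preprocessing.} Splitting $x$ along $\pi$ and classifying gates into $C_A, C_B, C_{AB}$ costs $O(n + G_A + G_B + d)$. This is a single pass over the circuit.

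\emph{Loop body.} The outer loop runs over $\alpha \in \{0,1\}^{d}$, so there are exactly $2^d$ iterations, since each cross-cut gate is rank-2 by assumption. In one iteration:
\begin{itemize}
\item building $\hat C_A, \hat C_B$ appends one local factor (a projector $\ket{\alpha}\bra{\alpha}_a$, a Pauli $Z_b^{\alpha}$, or the analogous CNOT/SWAP factor) per cross-cut gate, at $O(d)$ cost;
\item the phase $\mathrm{ph}_\alpha$ is a product of $d$ factors from $\{1/\sqrt{2}, \pm 1, \pm i\}$, also $O(d)$, and the $1/\sqrt{2}$ magnitudes are precomputed by the eager one-pass scan;
\item the two \textsc{LeafAmplitude} calls cost $\mathrm{cost}_{\mathrm{leaf}}(A)$ and $\mathrm{cost}_{\mathrm{leaf}}(B)$ by definition, each realised by $\mathrm{Select}$ on the augmented leaf circuit;
\item the accumulation into $T$ is $O(1)$.
\end{itemize}
Summing gives $2^d\,[\mathrm{cost}_{\mathrm{leaf}}(A)+\mathrm{cost}_{\mathrm{leaf}}(B)+O(d)] + O(n+G_A+G_B+d)$.

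\emph{Absorbing the overhead.} I would argue that every backend in Table~\ref{tab:leafbackends} must at least read its input circuit and target string. Hence $\mathrm{cost}_{\mathrm{leaf}}(A) = \Omega(G_A + |\hat C_A| + n_A)$, and similarly for $B$. Since $|\hat C_A| \ge G_A + d$ (and likewise for $B$), the $O(d)$ per-iteration term and the one-time preprocessing (which is at most a single iteration's worth) are dominated. This yields the stated $O(2^d[\mathrm{cost}_{\mathrm{leaf}}(A)+\mathrm{cost}_{\mathrm{leaf}}(B)])$.

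\emph{Main obstacle.} The subtle step is interpreting $\mathrm{cost}_{\mathrm{leaf}}$ consistently. The leaf circuits $\hat C_A \cdot C_A$ differ across $\alpha$, so the per-path cost may vary. I would take $\mathrm{cost}_{\mathrm{leaf}}(A)$ to mean the maximum over $\alpha$ of the cost of $\mathrm{Select}$ on the augmented side-$A$ circuit. This is harmless because the $d$ inserted factors are Clifford operations or stabilizer projectors: they add no T-gates and so do not enlarge $F$ beyond the leaf's T-count bound. I also need to check that each projector is implementable by the leaf backend within its stated cost, e.g.\ as a postselected measurement on the frame at $O(F m^2)$, which is within the frame row of the table. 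If it were not, the projections would have to be charged separately, though they are still linear in $d$ per path and absorbed as above.
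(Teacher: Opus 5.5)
Your proposal is correct and follows the paper's own argument in Appendix~\ref{app:cost-proofs}: there are $2^d$ iterations of the loop over $\alpha$, each dominated by the two \textsc{LeafAmplitude} calls, with the $O(d)$ construction of $\hat C_A,\hat C_B$ and the $O(1)$ accumulation absorbed. Your input-reading lower bound and your max-over-$\alpha$ reading of $\mathrm{cost}_{\mathrm{leaf}}$ simply make explicit what the paper dismisses as ``dominated by what follows''; the paper additionally re-derives the per-backend costs of Table~\ref{tab:leafbackends}, which you take as given by definition. One small correction: side $B$ need not receive a nontrivial factor for every cross-cut gate, so your ``likewise for $B$'' is not literally true, but only the summed bound $\mathrm{cost}_{\mathrm{leaf}}(A)+\mathrm{cost}_{\mathrm{leaf}}(B)=\Omega(G_A+G_B+d)$ is needed, and that one holds.
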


\begin{proof}
The $2^d$ outer factor counts Feynman paths.  Per path, the leaf
oracle runs once on each side under whichever backend $\mathrm{Select}$
chose for it; Table~\ref{tab:leafbackends} gives that backend's cost.
The frame case recovers the original $O(G \cdot F \cdot n^2)$ form
(the $n^2$ factor is the per-frame-element inner product
cost~\cite{garcia2014inner}), the dense case $O(G \cdot 2^n)$; the
mps and rank cases are read directly off the table and are
admissible only when their error clears $\epsleaf$
(Theorem~\ref{thm:errorbudget}).  See Appendix~\ref{app:cost-proofs}.
\qed
\end{proof}

\begin{proposition}[Multilevel cost]\label{prop:rec-cost}
Let the partition tree have $L$ leaves at depth $D$ and let level
$\ell$ contribute $d_\ell$ cross-cut gates per path through that level.
Algorithm~\ref{alg:rec} computes any single amplitude, to accuracy
$\epstot = \sqrt{2L}\,\epsleaf$ overall (Theorem~\ref{thm:errorbudget}),
in time
\[
  O\!\left( 2^{\sum_\ell d_\ell} \cdot \prod_{i=1}^{L}
            \mathrm{cost}_{\mathrm{leaf}}(i) \right)
\]
where $\mathrm{cost}_{\mathrm{leaf}}(i)$ is the cost of whichever of
the four backends of Table~\ref{tab:leafbackends} $\mathrm{Select}$
realises at leaf $i$.
\end{proposition}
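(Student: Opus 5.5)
The plan is a structural induction on the partition tree, using the recurrence that Algorithm~\ref{alg:rec} induces on running time. The product $\prod_i$ in the statement is the weakest point, so the argument will establish a sharper bound with a sum over leaves. It then weakens that to the stated product in a final step.

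Write $\mathcal{T}(v)$ for the time to evaluate \textsc{RecAmplitude} at node $v$. Take $d_\ell$ to be the maximum number of cross-cut gates over the internal nodes at level $\ell$; this is how I read ``per path through that level'', and any smaller per-node count only helps.

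At an internal node $v$ at level $\ell$, the loop runs over $2^{d_\ell}$ paths. Each iteration does three things:
\begin{enumerate}
  \item It splits the cross-cut gates and accumulates $\mathrm{ph}_\alpha$ in time $O(G_v)$, linear in the local gate count.
  \item It makes one recursive call on each child.
  \item It does $O(1)$ arithmetic for the product and accumulation.
\end{enumerate}
This gives
\[
  \mathcal{T}(v) \;\le\; 2^{d_\ell}\bigl(\mathcal{T}(v_L)+\mathcal{T}(v_R)+O(G_v)\bigr).
\]
At a leaf $i$ the recurrence bottoms out with $\mathcal{T}(i)=O(\mathrm{cost}_{\mathrm{leaf}}(i))$. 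This is read off Table~\ref{tab:leafbackends} for whichever backend $\mathrm{Select}$ picks, exactly as in the proof of Proposition~\ref{prop:single-cost}. The $O(G_v)$ splitting overhead is absorbed into the leaf costs below, since every gate of $v$ ends up in some descendant leaf's circuit and each leaf's cost is at least linear in its gate count.

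Unrolling the recurrence, leaf $i$ is invoked once for every choice of path at each ancestor, that is, $\prod_{\ell<\mathrm{depth}(i)}2^{d_\ell}\le 2^{\sum_\ell d_\ell}$ times. Summing over leaves gives the intermediate bound
\[
  \mathcal{T}(\mathrm{root}) \;=\; O\!\Bigl(2^{\sum_\ell d_\ell}\sum_{i=1}^{L}\mathrm{cost}_{\mathrm{leaf}}(i)\Bigr).
\]
The base case is one level, which is Proposition~\ref{prop:single-cost}. The inductive step applies the hypothesis to both subtrees and multiplies by $2^{d_\ell}$ at the root.

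The main obstacle is passing from the sum to the product in the statement. I would normalise each leaf cost so that $\mathrm{cost}_{\mathrm{leaf}}(i)\ge 2$; this is free inside $O(\cdot)$, since every backend does at least constant work per call. For reals $c_i\ge 2$ one has $\sum_i c_i\le\prod_i c_i$, proved by a one-line induction from $a+b\le ab$ whenever $a,b\ge2$. The stated product bound therefore follows, though it is loose.

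For the accuracy claim I would invoke Theorem~\ref{thm:errorbudget} with $N=L$. Each leaf meets $\epsleaf$ under $\mathrm{Select}$, and the path phases have unit modulus regardless of tree depth, which the magnitude-at-leaf rule guarantees. Hence the overall error is at most $\sqrt{2L}\,\epsleaf$, independent of which backends were chosen.
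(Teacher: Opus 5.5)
Your proposal is correct and is essentially the paper's argument (Appendix~A.2). Both use a structural induction on the recurrence induced by Algorithm~\ref{alg:rec}, a sum-dominated-by-product step that relies on a lower bound on leaf costs, and Theorem~\ref{thm:errorbudget} with $N=L$ for accuracy. The only difference is ordering: you first prove the sum-over-leaves bound (the paper's ``tighter accounting'') and weaken it to the product once, using $c_i\ge 2$ so that $\sum_i c_i\le\prod_i c_i$ holds exactly, whereas the paper weakens at every inductive step via ``max $\le$ product'' with $c_i\ge 1$.
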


\begin{proof}
By structural induction on the tree.  The base case is
Proposition~\ref{prop:single-cost}.  Each internal node at depth
$\ell$ contributes a factor of $2^{d_\ell}$ to the outer Feynman sum.
The product of leaf costs follows from the multiplicativity of the
two independent sub-amplitude evaluations at each internal node, and
the accuracy claim from Theorem~\ref{thm:errorbudget} applied with
$N = L$.  See Appendix~\ref{app:cost-proofs}.
\qed
\end{proof}

\begin{proposition}[Crossover condition]\label{prop:crossover}
Cutting a circuit at depth one is beneficial in time over direct
dense simulation if and only if
\[
  d \;<\; \tfrac{n}{2} - \log_2\left[ \min(2^{T_A}, 2^{n_A}) + \min(2^{T_B}, 2^{n_B})
        \right] + \log_2(2^n) \quad ,
\]
which for balanced partitions and a dense-leaf choice at both halves
simplifies to $d < n/2$.
\end{proposition}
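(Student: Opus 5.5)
The plan is to reduce the crossover condition to a direct comparison of two cost expressions under the paper's operation-count model, with constant factors and the common gate-count factor $G$ dropped. On the cut side, Proposition~\ref{prop:single-cost} gives cost $2^d\,[\mathrm{cost}_{\mathrm{leaf}}(A)+\mathrm{cost}_{\mathrm{leaf}}(B)]$. On the monolithic side, dense simulation costs $2^n$ per gate (Table~\ref{tab:leafbackends}).

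The first step is to instantiate each leaf cost with what $\mathrm{Select}$ realises between the two exact backends. A frame leaf costs at most $2^{T_X}$, up to the polynomial $m^2$ factor, which I will absorb. A dense leaf costs $2^{n_X}$. Taking the cheaper of the two gives $\mathrm{cost}_{\mathrm{leaf}}(X)=\min(2^{T_X},2^{n_X})$ for $X\in\{A,B\}$. This is exactly the bound the cost-model predictor uses: $\Feff\le 2^w$ substituted for the dense $2^{n_X}$ ceiling.

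The second step is to take the beneficial condition $2^d\,[\min(2^{T_A},2^{n_A})+\min(2^{T_B},2^{n_B})] < 2^n$ and apply $\log_2$. Because $\log_2$ is strictly monotone, this is an equivalence, which is where the ``if and only if'' comes from. Solving for $d$ gives
\[
  d < \log_2(2^n) - \log_2\!\bigl[\min(2^{T_A},2^{n_A})+\min(2^{T_B},2^{n_B})\bigr].
\]
This should be matched term by term against the displayed inequality.

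The third step is the specialization. For $n_A=n_B=n/2$ with dense leaves on both sides, the bracket equals $2^{n/2+1}$, so the threshold becomes $n - n/2 - 1 = n/2 - 1$. Up to the additive constant that the $O(\cdot)$ model already suppresses, this is $d<n/2$.

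The main obstacle is reconciling normalization with the displayed formula. As printed, the right-hand side carries both an $\tfrac n2$ term and a $\log_2(2^n)$ term. Taken literally, the balanced dense case then yields $n-1$ rather than $n/2$. So I would first check which reference cost the extra $\tfrac n2$ is meant to encode, for instance a per-half normalization of the dense baseline. I would then state the equivalence in the derived form above, which is the one consistent with the claimed $d<n/2$ specialization.

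A secondary caveat is that ``if and only if'' holds only within the model, that is, modulo the suppressed constants and polynomial factors. I would make that explicit rather than claim an exact wall-clock equivalence.
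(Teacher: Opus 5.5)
Your proposal is correct and follows the paper's proof in Appendix~\ref{app:cost-proofs} essentially step for step: drop $G$ and polynomial factors, compare $2^d(\ell_A+\ell_B)$ with $2^n$, take $\log_2$ to get $d < n - \log_2(\ell_A+\ell_B)$, and absorb the $-1$ of the balanced dense case $d < n/2 - 1$ into the suppressed constants. Your diagnosis of the printed formula is also right: the appendix silently reads the last term as $\tfrac12\log_2(2^n)$, which is what makes the displayed inequality equivalent to the derived form, whereas taken literally it gives $d < n-1$ in the balanced dense case.
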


\begin{proof}[Sketch]
Single-level quipu-cut with dense leaves costs
$O(2^d \cdot (2^{n_A} + 2^{n_B}))$; direct dense costs $O(2^n)$.  The
former is smaller iff the displayed inequality holds.  When both leaves
fall back to dense and $n_A = n_B = n/2$, the inequality collapses to
$2^d \cdot 2^{n/2+1} < 2^n$, i.e.\ $d < n/2 - 1$.  When one or both
leaves use a cheaper backend, the per-leaf cost is smaller and the
crossover threshold $d$ increases proportionally.  See
Appendix~\ref{app:cost-proofs}.
\qed
\end{proof}

\noindent The statement above is deliberately given for the
dense-vs-dense-leaves case because it renders in closed form; it
generalises unchanged in structure to any leaf-cost pair by
substituting $\mathrm{cost}_{\mathrm{leaf}}(A),
\mathrm{cost}_{\mathrm{leaf}}(B)$ for $\min(2^{T_A}, 2^{n_A}),
\min(2^{T_B}, 2^{n_B})$ throughout --- the proof does not depend on
which two backends realise those costs, only on $2^d$ times their
sum being compared against $2^n$.

Proposition~\ref{prop:crossover} answers \emph{whether to cut at
all}.  A separate question is which of the four backends
$\mathrm{Select}$ should prefer \emph{at a single leaf}, independent
of the tree structure around it.

\begin{proposition}[Leaf backend crossover]\label{prop:leaf-crossover}
Fix a leaf of qubit count $m$, gate count $G$, T-count $T$, frame
size $F$, admissible mps bond $\chi$, and admissible Bravyi--Gosset
weight cap $h$ --- $\chi$ and $h$ each the largest value the leaf's
share of $\epsleaf$ affords (Theorem~\ref{thm:errorbudget}).  Writing
$R := \sum_{k \le h}\binom{T}{k}$, the six pairwise preferences among
$\{\textrm{frame}, \textrm{dense}, \textrm{mps}, \textrm{rank}\}$
reduce, from Table~\ref{tab:leafbackends}, to:
\[
\begin{array}{ll}
\text{frame} \prec \text{dense} \iff F m^2 < 2^m
  & \text{mps} \prec \text{dense} \iff m \chi^2 < 2^m \\
\text{rank} \prec \text{dense} \iff R\, m^2 < 2^m
  & \text{frame} \prec \text{mps} \iff F m < \chi^2 \\
\text{frame} \prec \text{rank} \iff F < R
  & \text{mps} \prec \text{rank} \iff \chi^2 < m R
\end{array}
\]
where $B_1 \prec B_2$ means $B_1$ costs less, up to the constant
factors absorbed by every $O(\cdot)$ entry in
Table~\ref{tab:leafbackends}.
\end{proposition}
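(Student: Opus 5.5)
The plan is a direct pairwise comparison of the cost expressions in Table~\ref{tab:leafbackends}. First I fix a common normalisation. Every entry except rank carries the leaf gate count $G$ as a multiplicative factor. The rank entry $O(R\,m^2)$, as written, does not, so I read it as the per-gate cost in the same sense as the other three. Concretely, I charge the Bravyi--Gosset decomposition $O(G\cdot R\, m^2)$: each of the $R$ stabilizer terms undergoes the $G$ Clifford updates and the final inner product at $O(m^2)$ apiece, mirroring the frame accounting of Proposition~\ref{prop:single-cost}. I will state this convention explicitly, since the "up to constant factors" qualifier in the statement is exactly what licenses it. With it in place, the four costs become $G\cdot\kappa_B$ with
\[
\kappa_{\text{frame}} = F m^2,\quad \kappa_{\text{dense}} = 2^m,\quad \kappa_{\text{mps}} = m\chi^2,\quad \kappa_{\text{rank}} = R\,m^2 .
\]
Since $G>0$ is common to all four, $B_1\prec B_2 \iff \kappa_{B_1}<\kappa_{B_2}$, and the common factor drops out.

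Next I verify the six relations one by one, each by cancelling common positive factors. Against dense, nothing cancels: the three rows $Fm^2<2^m$, $m\chi^2<2^m$ and $Rm^2<2^m$ are read off directly. For frame versus mps, $Fm^2<m\chi^2$ reduces to $Fm<\chi^2$ after dividing by $m\ge 1$. For frame versus rank, $Fm^2<Rm^2$ reduces to $F<R$ after dividing by $m^2$. For mps versus rank, $m\chi^2<Rm^2$ reduces to $\chi^2<mR$ after dividing by $m$.

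Finally I address admissibility. The mps and rank comparisons are only meaningful at the $\chi$ and $h$ that meet $\epsleaf$, which is exactly how those parameters are fixed in the statement. By Theorem~\ref{thm:errorbudget} these choices are compatible with the global budget $\epstot$ regardless of how the other leaves are assigned. So $\mathrm{Select}$'s $\arg\min$ over the admissible set agrees with these pairwise orderings, and the criteria are well defined.

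The main obstacle is justifying that the $O(\cdot)$ comparisons are legitimate "iff" statements. Big-$O$ cost bounds compare only up to constants, so the equivalences are really statements about the leading-order cost models. I would state this explicitly: $\prec$ is defined on the model costs $\kappa_B$, as the phrase "up to the constant factors absorbed" in the statement indicates, rather than on true wall-clock times. The secondary subtlety is the missing $G$ in the rank entry, which the normalisation convention in the first step resolves.
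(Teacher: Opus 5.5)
Your proposal is correct and follows the paper's proof essentially line for line. You divide each pair of Time entries from Table~\ref{tab:leafbackends} by the common $G$ factor (and by $m$ or $m^2$ where both sides carry it), compare directly, and treat admissibility as in the paper's closing remark; your explicit fix for the missing $G$ in the table's rank row is the convention the paper uses implicitly when it divides by the ``common $G$ factor'', and it matches the $O(G\cdot\sum_{k\le h}\binom{T}{k}m^2)$ rank cost stated in Appendix~\ref{app:cost-proofs}.
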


\begin{proof}
Each line is the corresponding pair of entries from
Table~\ref{tab:leafbackends}'s Time column, divided through by their
common $G$ factor and, where both sides carry one, their common
$m^2$ factor, then compared directly.  Frame vs.\ dense:
$Fm^2 \lessgtr 2^m$.  Mps vs.\ dense: $m\chi^2 \lessgtr 2^m$.  Rank
vs.\ dense: $Rm^2 \lessgtr 2^m$.  Frame vs.\ mps: $Fm^2 \lessgtr
m\chi^2$, i.e.\ $Fm \lessgtr \chi^2$.  Frame vs.\ rank: $Fm^2
\lessgtr Rm^2$, i.e.\ $F \lessgtr R$.  Mps vs.\ rank: $m\chi^2
\lessgtr Rm^2$, i.e.\ $\chi^2 \lessgtr mR$.  A comparison only
enters $\mathrm{Select}$'s decision when both sides also satisfy
their error bound; for frame and dense that bound is $0 \le
\epsleaf$ trivially, so the frame/dense and (by transitivity)
frame-vs-mps, frame-vs-rank lines above hold whenever mps or rank is
independently admissible.
\end{proof}

\noindent Two of the six are the boundaries already load-bearing
elsewhere in this paper: frame-vs-dense is exactly the two-backend
comparison underlying Proposition~\ref{prop:single-cost}'s original
form (Appendix~\ref{app:cost-proofs}, \S A.1), and mps-vs-dense is
the same trade recognised in the leaf-classification rule of
Section~\ref{sec:densefallback}.  The other four are new: in
particular frame-vs-rank, $F \lessgtr R$, says the two exact-and-cheap
backends' own crossover is governed entirely by how the stabilizer
frame's realised element count compares to the Bravyi--Gosset
binomial sum at the leaf's own error-affordable weight cap --- a
purely combinatorial comparison, unlike the other five, which all
pit a combinatorial or bond-bounded cost against dense's raw
$2^m$.

\begin{proposition}[Never asymptotically worse than dense]
\label{prop:never-worse}
For any input circuit $C$ on $n$ qubits and any choice of partition
tree, Algorithm~\ref{alg:rec} with the adaptive leaf representation
of Section~\ref{sec:densefallback} completes in time $O(G \cdot 2^n)$,
the worst case of direct dense simulation.
\end{proposition}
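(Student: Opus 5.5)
The plan is to bound the cost of Algorithm~\ref{alg:rec} by the minimum of two quantities. The first is the tree cost predicted by Proposition~\ref{prop:rec-cost}. The second is the cost $O(G\cdot 2^n)$ of the dense backend that $\mathrm{Select}$ always keeps in its candidate set. The argument has two layers. The first is a per-leaf bound, which is straightforward. The second is a global bound, which must deal with the $2^{\sum_\ell d_\ell}$ path factor; this is where the real content lies.

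\emph{Step 1 (per-leaf bound).} For every leaf $i$ with $m_i$ qubits and $G_i$ gates, the dense backend is admissible, since its error is $0\le\epsleaf$. Hence $\mathrm{cost}_{\mathrm{leaf}}(i)=C_{\mathrm{Select}(i)}(i)\le O(G_i\,2^{m_i})$ by Table~\ref{tab:leafbackends}. The same holds for a leaf that the runtime check of Section~\ref{sec:densefallback} moves to dense mid-computation. In that case the cost is at most the work done before the switch plus the dense cost of the remainder, and the pre-switch work is capped by the trigger threshold. Since leaf qubit sets are disjoint and the leaf gate lists are sub-multisets of $C$ (each split cross-cut gate adds only $O(1)$ gates to each side), summing over leaves gives $\sum_i \mathrm{cost}_{\mathrm{leaf}}(i)\le O(G\cdot 2^{\max_i m_i})\le O(G\cdot 2^n)$ per Feynman path. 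I will read the product in Proposition~\ref{prop:rec-cost} as the additive per-path composition of Proposition~\ref{prop:single-cost}. That is what the induction in its proof actually yields, because the two children are evaluated independently on each path.

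\emph{Step 2 (path factor; the main obstacle).} Per-path work alone does not give the claim. A fixed tree with $\sum_\ell d_\ell > n-\max_i m_i$ would exceed $O(G\cdot 2^n)$ if the path sum were executed literally. The statement therefore has to rest on the adaptivity of Section~\ref{sec:densefallback} applied at the level where it is invoked, namely the eager one-pass scan before descent. My first attempt is as follows. The structural scan evaluates the cost predictor of Proposition~\ref{prop:rec-cost} for the given tree; this costs $O(G\,n)$, using the cross-cut counts $d_\ell$ already computed for the $1/\sqrt2$ bookkeeping. The scan then compares that prediction against $G\cdot 2^n$. If the prediction is larger, the scan treats the root as a single leaf, and $\mathrm{Select}$ then routes that leaf to dense. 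This is the degenerate case $D=0$, where "leaf representation switches to dense state vector" is condition (b) of Section~\ref{sec:multilevel}. In either branch the realised cost is $O(G n)+\min\{\text{tree cost},\,O(G\,2^n)\}=O(G\cdot 2^n)$, using $n\le 2^n$.

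\emph{Step 3 (assembly).} Combining the two branches gives the bound for every $C$ and every tree. Correctness is unaffected, because any partition and any exact-or-budgeted backend yields the same amplitude up to $\epstot$ by Theorem~\ref{thm:errorbudget}. The step I expect to be hardest to make rigorous is Step~2. It must be argued that the scan-time comparison counts as part of "the adaptive leaf representation" and not as an extra mechanism. It must also be argued that the runtime switch cannot cause the total work to exceed $O(G\cdot 2^n)$, that is, that the abandoned partial work is itself $O(G\cdot 2^n)$. I would handle the latter by setting the switch trigger at a constant multiple of the dense prediction, so that wasted work is geometrically bounded.
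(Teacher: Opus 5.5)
Your proof is essentially correct, but for non-degenerate trees it takes a different route from the paper's Appendix A.4.

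\textbf{Where you agree.} On the degenerate tree you and the paper argue the same way: dense is always in $\mathrm{Select}$'s candidate set, so no leaf choice costs more than $O(G\,2^n)$.

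\textbf{Where you differ.} For general trees the paper never overrides the tree. It assumes $d_\ell\le n_\ell-1$ at every internal node, enforced by having the partition selector reject cuts with $d_\ell\ge n_\ell$. It then telescopes $\prod_\ell 2^{d_\ell}\le 2^n$ and multiplies by the summed dense leaf cost $\sum_i G_i\,2^{n_i}\le G\,2^n$. It also concedes that trees violating the assumption can exceed the bound. Your global check instead collapses the tree to a single leaf whenever the predicted tree cost exceeds $G\,2^n$. That covers genuinely arbitrary trees, which is what the statement claims, at only $O(Gn)$ overhead.

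\textbf{Support for Step 2.} The justification you flagged as hardest is already in the paper. The branch-and-bound remark following Proposition~\ref{prop:never-worse} describes the adaptive leaf choice as a pruning rule: a node whose bound exceeds a threshold ``is solved directly rather than further decomposed.''

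\textbf{What your route buys.} It avoids a weak point in the paper's argument. The per-level condition $d_\ell\le n_\ell-1$ does not by itself give $O(G\,2^n)$. A single balanced root cut with $d_0=n-1$ and dense leaves satisfies that condition, yet costs about $2^{n-1}\cdot G\,2^{n/2}$. In general, multiplying a path factor of up to $2^n$ by a per-path cost of $G\,2^{\max_i n_i}$ overshoots $G\,2^n$. What is actually needed is a condition like your $\sum_\ell d_\ell+\max_i m_i\le n$, which is consistent with the $d<n/2-1$ threshold of Proposition~\ref{prop:crossover}. Your check enforces it directly.

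\textbf{Two points to tighten.}
\begin{enumerate}
  \item The prediction you compare against $G\,2^n$ must be a guaranteed upper bound. Build it from your Step 1 dense bounds $G_i\,2^{m_i}$, each weighted by its leaf's path multiplicity, not from an optimistic frame estimate.
  \item For the mid-computation switch, restart the leaf under dense rather than charging only ``the dense cost of the remainder.'' Converting a live frame to a dense vector is not free. With a restart, the trigger threshold you proposed already pays for the abandoned work.
\end{enumerate}
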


\begin{proof}[Sketch]
A degenerate tree with a single leaf (no cut) falls through to the
leaf oracle, which under the adaptive representation runs dense
simulation in $O(G \cdot 2^n)$.  Any non-degenerate tree distributes
work across leaves whose sizes sum (with multiplicity from cross-cut
paths) to at most $G \cdot 2^n$, since each path's leaves jointly
cover the $n$ qubits and the adaptive representation bounds each leaf
to its local dense cost.  The multiplicative cross-cut factor
$2^{\sum_\ell d_\ell}$ is bounded by $2^n$ in the case where every
cross-cut gate doubles work, which is the input-determined worst
case; the adaptive choice ensures the leaves are not also worst-case
in that regime.  See Appendix~\ref{app:cost-proofs}.
\qed
\end{proof}

\noindent Proposition~\ref{prop:never-worse} is the deployability
guarantee: the algorithm can be applied to circuits with unknown or
adversarial T-gate placement without the risk that frame explosion
silently degrades the simulation past direct dense.  Practitioners
who cannot afford to inspect circuit structure ahead of simulation
inherit a worst-case bound matched to a known-safe baseline.

\begin{remark}[Branch-and-bound view]
Proposition~\ref{prop:rec-cost} casts the multilevel cut as a
branch-and-bound algorithm over Feynman path space.  The branching
variable is $\alpha_g \in \{0,1\}$ for each cross-cut gate at each
internal node; the bounding function is the sum of remaining
cross-cut depth and the larger of $\min(T_A, n_A)$ and
$\min(T_B, n_B)$.  The adaptive leaf choice acts as a pruning rule: when the
bound at a node exceeds a configurable threshold, the node is solved
directly rather than further decomposed.  Kernighan--Lin bisection at
each internal node minimizes $d_\ell$ (a tighter bound) and expands
the regime in which branching is profitable.
\end{remark}

\subsection{Wall-time prediction for partition selection}
\label{sec:wallpred}

Propositions~\ref{prop:single-cost}--\ref{prop:never-worse}
characterize asymptotic cost.  They are necessary for the
deployability guarantee of Proposition~\ref{prop:never-worse} but
insufficient for choosing among partitions of equal $d$: two
partitions with the same cross-cut count can have wildly different
wall-clock times if they place T-gates differently across halves.  This
subsection specifies the wall-clock predictor that drives partition
selection.  The construction has three ingredients: a per-gate cost
that follows the $\Feff$ bound rather than the dense ceiling, an
explicit end-of-path readout term that captures the slice-versus-
inline asymmetry, and two constants calibrated on the target hardware
by a standalone micro-benchmark.

\paragraph{From asymptotic bound to per-gate predictor.}
Proposition~\ref{prop:single-cost} can be read literally as a
per-gate cost summation: each gate $g$ contributes
$F_X(g) \cdot n_X^2$ where $X$ is the half on which $g$ acts, $F_X(g)$
is the stabilizer-frame size for half $X$ at the time $g$ fires, and
$n_X$ is the half qubit count.  We compute $F_X(g)$ incrementally
using the cofactor-independence
rule~\cite{garcia2015simulation, garcia2014inner}:
\begin{equation}
  \log_2 F_X(g) \;=\; \bigl| \{ (q, e) :
    \text{ a T-gate fired on } q \in X \text{ in Hadamard-epoch } e
    \text{ before gate } g \} \bigr|,
  \label{eq:f-update}
\end{equation}
capped at $n_X$ (the dense ceiling).  This is the $\Feff \le 2^w$
substitution that distinguishes the predictor from cut-count
minimization: a partition that imbalances T-gate density across halves
is correctly penalised by a larger $F_X(g)$ on the heavy side.

\paragraph{Joint mode and frame evolution.}  When a cross-cut gate
$g$ is inlined rather than sliced, the two halves' frames are
tensored into a joint frame of initial size $F_J = F_A \cdot F_B$ on
the full $n$ qubits.  Subsequent T-gates on \emph{either} half double
the joint frame regardless of provenance, since the merged state has
forgotten the split partition; Clifford gates leave $F_J$ unchanged.
Subsequent local gates run on the joint frame at cost
$F_J \cdot n^2$.

\paragraph{End-of-path readout term.}  Each Feynman path concludes
with one amplitude extraction via \verb|sf.get_ketvec()|.  A slice-
mode path materialises two per-half kets of size $2^{n_A}$ and
$2^{n_B}$; a joint-mode path materialises one joint ket of size
$2^n$.  The per-op cost of materialisation differs from gate apply
because the access pattern is an indexed sum over $F$ frame
contributions per basis state, rather than a structured tableau
update; we denote the ratio $K_{\mathrm{readout}}$ and treat it as a
calibrated constant.  This term turns out to dominate inline-mode
wall-clock time at the problem sizes of Section~\ref{sec:eval}, and its
omission from $F$-only cost models is the root cause of the bad
all-inline picks documented below.

\paragraph{Total predicted cost.}  Walking the circuit under
partition $\pi$ and slice subset $S$ (a subset of the cross-cut gate
indices, slice if included, inline if not), accumulating a branch
count $b$ that grows by the Schmidt rank $r_g$ at each sliced
cross-cut, and tracking the slice/joint mode transition, the
predicted total work is
\begin{equation}
  \widehat{T}(\pi, S) \;=\;
    \underbrace{\textstyle \sum_{g} b(g) \cdot c_{\mathrm{gate}}(g)}_{\text{gate apply}}
    \;+\;
    \underbrace{K_{\mathrm{inline}} \cdot b_0 \cdot F_J^0 \cdot n^2}_{\text{tensor and first inline apply}}
    \;+\;
    \underbrace{b_{\mathrm{end}} \cdot c_{\mathrm{readout}}(\pi, S)}_{\text{end-of-path materialisation}}.
  \label{eq:total-cost}
\end{equation}
The middle term contributes only on paths that enter joint mode; $b_0$
and $F_J^0$ are the branch count and joint-frame size at the first
inline event.  The readout cost is
$c_{\mathrm{readout}} = K_{\mathrm{readout}} \cdot F_J^{\mathrm{end}}
\cdot 2^n$ for joint paths and $c_{\mathrm{readout}} =
K_{\mathrm{readout}} \cdot \bigl(F_A^{\mathrm{end}} \cdot 2^{n_A} +
F_B^{\mathrm{end}} \cdot 2^{n_B}\bigr)$ for slice paths.  The per-gate
cost $c_{\mathrm{gate}}(g)$ is $\min(2^{n_X}, F_X(g)) \cdot n_X^2$ in
slice mode and $\min(2^n, F_J(g)) \cdot n^2$ in joint mode.

\paragraph{Two micro-benchmarked constants.}  $K_{\mathrm{inline}}$
and $K_{\mathrm{readout}}$ are calibrated on the target hardware by a
standalone micro-benchmark (\verb|quipu-5.5-calibrate|).  \emph{Phase
1} builds reference stabilizer frames at controlled $F$ and times
$K$ Clifford applications, yielding the asymptotic per-op constant
$K_{\mathrm{leaf}}$ used to convert op counts into seconds.  \emph{Phase
2} times the inline event (tensor + one apply on the resulting joint
frame) at matched $(F_A, F_B)$ pairs; the median ratio of measured
time to $K_{\mathrm{leaf}} \cdot F_A \cdot F_B \cdot n^2$ is
$K_{\mathrm{inline}}$.  \emph{Phase 3} times \verb|sf.get_ketvec()|
at the same $(F, n)$ as Phase~1 and yields $K_{\mathrm{readout}}$ as
a multiplier on $K_{\mathrm{leaf}}$.  On Apple Silicon (M2~Pro,
unified memory) with the workloads of Section~\ref{sec:eval},
\[
  K_{\mathrm{inline}} \;\approx\; 6.3,
\]
with $K_{\mathrm{leaf}}$ flat to within $6\%$ over the working-set
range $6\,$KB to $25\,$MB.  Cache-miss penalty is not a contributor on
this hardware; on more cache-constrained platforms a working-set-
dependent multiplier on $K_{\mathrm{leaf}}$ could be added without
disturbing the structure of Eq.~\ref{eq:total-cost}.

\paragraph{Direct instrumentation: identifying a readout
inefficiency.}  Before the fix described below, the readout cost was
the dominant term on inline-mode paths and required its own
calibrated constant ($K_{\mathrm{readout}} \approx 40$ relative to
$K_{\mathrm{leaf}}$).  We isolated the source by directly timing each
component of the simulator on a representative high-cross-cut case
(\verb|rct_n16_t005_1| under a random partition with $d{=}10$, all
cross-cut gates inlined).  The total wall-clock time was 910\,ms,
distributed as:
\begin{description}
  \item[Gate-apply loop, 54 joint-mode gates:] 14\,ms
  \item[Tensor product and first inline apply:] $\sim$\,5\,ms
  \item[\texttt{sf\_J.get\_ketvec()} at $F=32$, $n=16$:] 897\,ms
\end{description}
Ninety-eight percent of the time was in materialising a $2^{16}$-entry
dense state vector at end-of-path, to index a single amplitude.  Replacing
this with an $O(F \cdot s \cdot n)$ single-amplitude
inner-product (using the same \texttt{get\_amp\_no\_gj} primitive that
powers Hadamard frame cofactoring, where $s$ is the X-pivot rank of
the tableau) drops the readout to 0.15\,ms on this case and the total
wall-clock time to 14.7\,ms, a $62\times$ speedup.  The fix applies
uniformly to slice-mode and joint-mode amplitude extraction; once it is
in place the readout cost becomes commensurate with gate-apply cost,
$K_{\mathrm{readout}}$ reduces to $\approx 1$, and a single calibrated
constant ($K_{\mathrm{inline}}$) suffices.  All numbers reported in
Section~\ref{sec:eval} use the corrected single-amplitude readout
path.

\paragraph{Partition and slice selection.}  Given the predictor of
Eq.~\ref{eq:total-cost}, the selector enumerates a fixed pool of
candidate partitions: greedy Kernighan--Lin~\cite{kl1970},
spectral bisection~\cite{pothen1990partitioning}, eight multi-start
KL with distinct random seeds, and four random balanced
bipartitions.  For each candidate it searches for the optimal slice
subset: brute-force enumeration of $2^d$ subsets when $d \le 14$,
otherwise a multi-seed greedy local search seeded with the all-slice
and all-inline subsets, with simulated annealing
refinement~\cite{gray2021hyperoptimized}.  The selector returns the
$(\pi^\star, S^\star)$ pair minimising $\widehat{T}$ across all
candidates.  The evaluation in Section~\ref{sec:eval} uses these
selected partitions; a controlled comparison of selector-driven
partitioning against plain graph bisection is left to future work.


\section{Parallel Structure}\label{sec:parallel}

The quipu-cut algorithm is embarrassingly parallel along three nested
independent axes, none of which require inter-worker communication.

\paragraph{Axis 1: Feynman paths.}  At each internal node, the
$2^{d_\ell}$ Feynman paths share no state---each path constructs its
own decomposed sub-circuits and runs its own leaf simulations.
A coordinator can distribute paths across workers in any pattern;
results combine by a final summation requiring $O(2^{d_\ell})$
network words, independent of $n$, $G$, or the leaf simulator's
internal state.

\paragraph{Axis 2: per-path leaves.}  Per path, the two leaf
simulations are on disjoint qubit subsets with disjoint
stabilizer tableaus.  They share no state and can run on different
workers, different cores, or even different machine architectures
(e.g., one leaf on CPU stabilizer-frame, the other on a GPU dense
state vector).

\paragraph{Axis 3: intra-leaf stabilizer-frame work.}  Within a single
stabilizer-frame leaf, the $F$ frame elements each carry independent
correction-Pauli/amplitude data; per-element work (gate applications,
inner-product extractions for the target amplitude) is independent
across frame elements.  Our reference implementation threads these
loops automatically for $n \geq 14$.

The three axes nest multiplicatively: a depth-$D$ partition tree with
$d_\ell$ cross-cut gates per level admits up to
$2 \cdot \prod_\ell 2^{d_\ell} \cdot F_{\max}$ worker-task parallelism
per amplitude query.  Batch queries (compute $K$ amplitudes for
varying $x$) further amortize the path enumeration: the per-path
leaf simulations depend on $\alpha$ but not on $x$, so each path's
two leaf states can be reused across all $K$ targets.

\paragraph{Cost-model comparison.}
Table~\ref{tab:parallelism} compares the parallel structure of
quipu-cut against representative near-Clifford simulators.

\begin{table}[t]
\centering
\caption{Parallel structure of representative classical simulators.}
\label{tab:parallelism}
\begin{tabular}{lll}
\toprule
\textbf{Simulator} & \textbf{Outer parallelism} & \textbf{Communication} \\
\midrule
qsim hybrid SF~\cite{qsim2020}      & $2^d$ paths, dense leaves    & $O(2^d)$ \\
quipu-cut (this paper)               & $2^d$ paths, stabilizer-frame leaves & $O(2^d)$ \\
SuperSim wire-cut~\cite{smith2023clifford} & $16^K$ fragments (sampled)  & $O(\mathrm{samples})$ \\
SPIR/SPC~\cite{huang2021feynman}    & Path enumeration                 & $O(\mathrm{paths})$ \\
STN MPS~\cite{masotllima2024stabilizer}    & Limited (single global tableau)  & high \\
Bravyi--Gosset~\cite{bravyi2018simulation} & $\chi$ stabilizer terms or MC samples & low \\
\bottomrule
\end{tabular}
\end{table}

Stabilizer Tensor Networks~\cite{masotllima2024stabilizer} are
architecturally distinct: a single global stabilizer tableau is
coupled to a 1-D matrix product state encoding amplitude
correlations, with non-Clifford gates growing the MPS bond dimension.
This couples the entire register state through the global tableau and
admits only limited within-gate parallelism.  Quipu-cut takes the dual
choice---two independent per-half tableaus, with the cross-cut bond
enumerated as a Feynman sum---in exchange for per-half independence
and the parallel structure above.

\paragraph{Empirical per-path cost: stabilizer vs.\ dense leaves.}
Quipu-cut and qsim's hybrid Schr\"odinger--Feynman simulator share the
\emph{same} outer parallelism (Table~\ref{tab:parallelism}: $2^d$
independent paths, $O(2^d)$ reduce); they differ only in the per-path
leaf---a stabilizer-frame half-simulation versus a dense $2^{n/2}$
half-state-vector.  This isolates the leaf cost on a fixed parallel
skeleton.  We measure it directly: exposing a path-range primitive
\texttt{partial\_amplitude(lo,hi)} that sums an arbitrary sub-range of
the $2^d$ paths, we partition $[0,2^d)$ into $W$ chunks and compare
against qsimh under an identical cut and path count.  Both simulators
reduce correctly across the $W$ chunks (Axis~1), confirming the
embarrassingly-parallel decomposition end to end.  On the same
$256$-path decomposition of an $n{=}21$ HaPPY-code amplitude, quipu-cut's
path sum runs in $2.3$\,ms versus qsimh's $185$\,ms---a
$\mathbf{79\times}$ per-path advantage attributable entirely to the
stabilizer leaf, since the path count, cut, and reduce are identical.
The gap widens with $n$ as the dense leaf scales $2^{n/2}$; across a
set of near-Clifford circuits ($n\le 21$) the end-to-end speedup over
qsimh is $69$--$1156\times$ (geometric mean $346\times$), with three-way
amplitude agreement against a full state vector on every cell.  (The
additional factor beyond $79\times$ is the dispatcher electing a cheaper
backend than the brute path sum when one exists; the $79\times$ is the
pure leaf-cost ratio at matched decomposition.)

\paragraph{When distribution pays.}  The same path-range primitive makes
a single amplitude a cluster job (chunk $[0,2^d)$ across workers, sum the
partials).  But the per-task work is so cheap for near-Clifford leaves
that a single quipu-cut amplitude typically completes in milliseconds---
below the cold-start of a cloud worker---so single-query distribution
rarely pays; the parallel structure is decisive instead for large
cross-cut width $d$, for batch amplitude/sampling queries (which amortize
the path enumeration, above), and against dense-leaf simulators whose
per-path cost makes distribution worthwhile far sooner.


\section{Empirical Evaluation}\label{sec:eval}

We measure quipu-cut against two baselines on two workload classes.
The baselines are: (i) monolithic stabilizer-frame
simulation~\cite{garcia2015simulation} (the leaf simulator without
the cut), serving as the ablation; and (ii) qsim~\cite{qsim2020},
Google's production state-vector simulator, with $4$ threads.
Workloads are: \emph{structured} hierarchical circuits where T-gates
are clustered (small $d$ under KL bisection), and \emph{adversarial}
random Clifford+T (RCT) circuits where T-gates are uniformly placed
and cross-cut gates are unavoidable.

\paragraph{Setup.}  All measurements were taken on a single
workstation, with each job under a $60\,\mathrm{s}$ wall-clock
timeout.  We compute the single amplitude $\braket{0^n}{U \ket{0^n}}$
for each circuit and verify that all three simulators agree to within
$10^{-5}$.  The reported wall-clock times include all setup, gate
application, and amplitude extraction.

\subsection{Structured circuits: hierarchical T-blocks}
\label{sec:eval-hier}

The hierarchical workload is a $16$-qubit circuit family in which
T~gates are placed in $T$-block clusters on contiguous qubit subsets,
with non-overlapping clusters connected by Clifford layers.  These
circuits are the natural target for a partition that places each
cluster within a single side: cross-cut gates are then few, and the
2-level partition tree decomposes each side further.

Table~\ref{tab:hier} reports median wall-clock times across $3$ random
trials for monolithic stabilizer-frame simulation (\textsf{SF}),
multilevel quipu-cut (\textsf{Rec}), and qsim, on $n{=}16$ circuits
swept across T-count $t \in \{2,4,6\}$, top-level cut depth
$d_{\mathrm{top}} \in \{0,1,2\}$, and mid-level cut depth
$d_{\mathrm{mid}} \in \{0,1,2\}$ ($81$ circuits in total: $27$
parameter cells $\times$ $3$ trials each).

\begin{table}[t]
\centering
\caption{Structured hierarchical $n{=}16$ benchmark, median wall
times across $3$ trials per $(t, d_{\mathrm{top}}, d_{\mathrm{mid}})$
cell (2026-05-21).  Times in seconds for \textsf{SF} and
milliseconds for \textsf{Rec} and qsim.  $d_{\mathrm{top}}$ is the
number of cross-cut Clifford gates at the top-level $A \mid B$
bipartition; $d_{\mathrm{mid}}$ is the number per side at the
mid-level $A_1 \mid A_2$ and $B_1 \mid B_2$ bipartitions.
Amplitudes match across all simulators to $10^{-5}$.}
\label{tab:hier}
\begin{tabular}{rrrrrrrr}
\toprule
$t$ & $d_{\mathrm{top}}$ & $d_{\mathrm{mid}}$ & \textsf{SF} (s) & \textsf{Rec} (ms) & qsim (ms) & SF/Rec & qsim/Rec \\
\midrule
2 & 0 & 0 & 0.64 & 0.93 & 44.74 & $683\times$ & $47.85\times$ \\
2 & 0 & 1 & 0.65 & 1.89 & 32.30 & $343\times$ & $17.12\times$ \\
2 & 0 & 2 & 0.65 & 1.05 & 33.44 & $615\times$ & $31.84\times$ \\
2 & 1 & 0 & 0.66 & 1.89 & 32.38 & $349\times$ & $17.10\times$ \\
2 & 1 & 1 & 0.64 & 3.62 & 34.43 & $178\times$ & $9.52\times$ \\
2 & 1 & 2 & 0.65 & 7.03 & 34.70 & $92\times$ & $4.93\times$ \\
2 & 2 & 0 & 0.65 & 1.01 & 34.19 & $643\times$ & $33.95\times$ \\
2 & 2 & 1 & 0.64 & 1.95 & 34.10 & $328\times$ & $17.49\times$ \\
2 & 2 & 2 & 0.64 & 3.74 & 37.29 & $172\times$ & $9.97\times$ \\
\midrule
4 & 0 & 0 & 10.20 & 1.98 & 34.87 & $5158\times$ & $17.63\times$ \\
4 & 0 & 1 & 10.33 & 3.83 & 35.70 & $2695\times$ & $9.31\times$ \\
4 & 0 & 2 & 10.34 & 2.19 & 33.14 & $4726\times$ & $15.14\times$ \\
4 & 1 & 0 & 10.29 & 3.91 & 34.09 & $2635\times$ & $8.73\times$ \\
4 & 1 & 1 & 10.32 & 7.50 & 34.77 & $1376\times$ & $4.64\times$ \\
4 & 1 & 2 & 10.22 & 15.42 & 35.90 & $662\times$ & $2.33\times$ \\
4 & 2 & 0 & 10.28 & 2.02 & 33.10 & $5087\times$ & $16.38\times$ \\
4 & 2 & 1 & 10.11 & 3.73 & 32.88 & $2711\times$ & $8.82\times$ \\
4 & 2 & 2 & 10.16 & 7.87 & 33.80 & $1292\times$ & $4.30\times$ \\
\midrule
6 & 0 & 0 & 61.11 & 3.46 & 39.78 & $17645\times$ & $11.49\times$ \\
6 & 0 & 1 & 60.90 & 6.91 & 33.73 & $8818\times$ & $4.88\times$ \\
6 & 0 & 2 & 61.42 & 3.61 & 36.25 & $17008\times$ & $10.04\times$ \\
6 & 1 & 0 & 61.42 & 7.05 & 34.73 & $8708\times$ & $4.92\times$ \\
6 & 1 & 1 & 60.59 & 13.89 & 39.32 & $4363\times$ & $2.83\times$ \\
6 & 1 & 2 & 61.12 & 27.36 & 35.34 & $2234\times$ & $1.29\times$ \\
6 & 2 & 0 & 60.69 & 3.47 & 33.73 & $17511\times$ & $9.73\times$ \\
6 & 2 & 1 & 61.44 & 6.71 & 33.01 & $9159\times$ & $4.92\times$ \\
6 & 2 & 2 & 60.38 & 14.53 & 35.13 & $4155\times$ & $2.42\times$ \\
\bottomrule
\end{tabular}
\end{table}

\begin{figure}[t]
\centering
\begin{tikzpicture}
\begin{semilogyaxis}[
  width=0.85\linewidth,
  height=6.6cm,
  xlabel={T-count $t$},
  ylabel={Wall time (ms)},
  xtick={2,4,6},
  legend pos=outer north east,
  legend cell align=left,
  ymajorgrids,
  grid style={dashed,gray!30},
  ymin=0.5, ymax=200000,
  error bars/error bar style={line width=0.6pt},
]
\addplot+[mark=*, thick, color=red, mark options={fill=red},
  error bars/.cd, y dir=both, y explicit]
  coordinates {
    (2, 574) +- (258, 129)
    (4, 18923) +- (13901, 20828)
    (6, 60248) +- (30189, 23001)
  };
\addlegendentry{Monolithic stabilizer frame}
\addplot+[mark=diamond*, thick, color=orange!90!black, mark options={fill=orange},
  error bars/.cd, y dir=both, y explicit]
  coordinates {
    (2, 4.5) +- (2.0, 4.5)
    (4, 19.7) +- (11.3, 22.7)
    (6, 46.3) +- (18.2, 35.2)
  };
\addlegendentry{Single-level (flat) cut}
\addplot+[mark=triangle*, thick, color=black!55!green, mark options={fill=green!60!black},
  error bars/.cd, y dir=both, y explicit]
  coordinates {
    (2, 36.2) +- (8.3, 11.6)
    (4, 36.8) +- (6.6, 10.5)
    (6, 37.4) +- (7.6, 18.8)
  };
\addlegendentry{qsim}
\addplot+[mark=square*, thick, color=blue, mark options={fill=blue!70},
  error bars/.cd, y dir=both, y explicit]
  coordinates {
    (2, 2.7) +- (1.8, 5.3)
    (4, 5.5) +- (3.8, 10.6)
    (6, 10.0) +- (6.8, 18.5)
  };
\addlegendentry{Multilevel quipu-cut}
\end{semilogyaxis}
\end{tikzpicture}
\caption{Wall-time scaling on the structured hierarchical $n{=}16$
benchmark across the full $27$-cell cross-cut budget grid per $t$
($d_{\mathrm{top}}, d_{\mathrm{mid}} \in \{0,1,2\}$, $3$ trials per
cell, $81$ measurements per simulator at each $t$).  Markers are
sample means; bars span min--max across the $81$ measurements.
Monolithic stabilizer frame grows $\sim 30\times$ per $+2$ in $t$
($0.57\,$s $\to 19\,$s $\to 60\,$s), single-level (flat) cut grows
more slowly but still $\sim 10\times$ per $+2$, and multilevel
quipu-cut and qsim remain near-flat over this $t$ range.  Multilevel
quipu-cut leads qsim by $\sim 5\!\times$ in median across the
workload---the win shrinking with cross-cut budget but never
inverting.}
\label{fig:wall-vs-t}
\end{figure}

\begin{figure}[t]
\centering
\begin{tikzpicture}
\begin{semilogyaxis}[
  width=0.85\linewidth,
  height=6.6cm,
  xlabel={Total cross-cut budget $d_{\mathrm{top}}+d_{\mathrm{mid}}$},
  ylabel={qsim / quipu-cut speedup},
  xtick={0,1,2,3,4},
  legend pos=outer north east,
  legend cell align=left,
  ymajorgrids,
  grid style={dashed,gray!30},
  ymin=0.7, ymax=80,
]
\addplot[mark=none, dashed, color=gray, domain=-0.2:4.2] {1};
\addlegendentry{break-even}
\addplot+[only marks, mark=*, color=blue!70!black, mark size=2.6pt]
  coordinates {(0, 47.85) (1, 17.12) (2, 31.84) (1, 17.10) (2, 9.52) (3, 4.93) (2, 33.95) (3, 17.49) (4, 9.97)};
\addlegendentry{$t = 2$}
\addplot+[only marks, mark=square*, color=orange!85!black, mark size=2.6pt]
  coordinates {(0, 17.63) (1, 9.31) (2, 15.14) (1, 8.73) (2, 4.64) (3, 2.33) (2, 16.38) (3, 8.82) (4, 4.30)};
\addlegendentry{$t = 4$}
\addplot+[only marks, mark=triangle*, color=red!75!black, mark size=2.8pt]
  coordinates {(0, 11.49) (1, 4.88) (2, 10.04) (1, 4.92) (2, 2.83) (3, 1.29) (2, 9.73) (3, 4.92) (4, 2.42)};
\addlegendentry{$t = 6$}
\end{semilogyaxis}
\end{tikzpicture}
\caption{qsim/quipu-cut speedup across the full 27-cell hierarchical
workload (one marker per $(d_{\mathrm{top}}, d_{\mathrm{mid}})$ cell;
median over $3$ trials).  Each Feynman cross-cut roughly halves the
speedup as $2^d$ in the path enumeration overwhelms the leaf savings,
yet quipu-cut beats qsim in every one of the $27 \times 3 = 81$
measurements ($1.29\times$ at the worst cell to $47.85\times$ at the
best).  Vertical spread at each $d$ reflects asymmetry between
top-level and mid-level cuts---a mid-level cut fans out work in both
halves whereas a top-level cut splits it once.}
\label{fig:rec-vs-d}
\end{figure}

Observations:
\begin{itemize}
  \item Across all $27$ parameter cells, multilevel quipu-cut beats
        monolithic stabilizer-frame simulation by 92--17645$\times$
        and beats qsim by 1.29--47.85$\times$.  The minimum
        speedup over qsim ($1.29\times$ at $t{=}6,
        d_{\mathrm{top}}{=}1, d_{\mathrm{mid}}{=}2$) corresponds to
        the cell with the largest cross-cut-gate budget combined
        with the worst-case T-count---five cross-cut gates and
        $2^6$ frame growth in each half---and is still a win.
  \item Speedup over qsim falls off cleanly with the cross-cut
        budget $d_{\mathrm{top}} + d_{\mathrm{mid}}$: each
        additional cross-cut gate doubles the Feynman path-space
        size, and the per-path leaf work is essentially constant on
        these circuits, so the speedup ratio approximately halves.
        At $t{=}6$, qsim/Rec falls from $11.49\times$ at
        $(d_{\mathrm{top}}, d_{\mathrm{mid}}) = (0,0)$ to $1.29\times$
        at $(1,2)$, consistent with the $2^d$ outer factor of
        Proposition~\ref{prop:single-cost}.
  \item Speedup over monolithic stabilizer frames grows with $t$:
        from a worst-case $\sim 100\times$ at $t{=}2$ to nearly
        $18{,}000\times$ at $t{=}6$.  This reflects the worst-case
        $F = 2^t$ frame growth in the monolithic case, which the
        $4$-way partition of the multilevel tree avoids by keeping
        each leaf at most $2^{t/4}$ frames.
  \item Wall time for monolithic stabilizer frames at $t{=}6$
        ranges from $30$ to $83$ seconds across random trials
        (single-trial variation, since $\Feff$ depends on the
        specific T-gate placement and merge structure).  The
        median of $\approx 61$ seconds reported here is
        representative.
\end{itemize}

\subsection{Adversarial circuits: random Clifford$+$T}
\label{sec:eval-rct}

The RCT workload draws T~gates uniformly at random across all qubits,
producing circuits in which cross-cut gates between any partition are
unavoidable and per-leaf $\Feff$ stays close to $2^{T_{\mathrm{leaf}}}$
(no merging savings).  This is the adversarial case where the cut's
constant-factor overhead may not pay for itself.

Table~\ref{tab:rct} reports quipu-cut against the same baselines on
$n \in \{12, 16, 20\}$ and $t \in \{5, 10, 20\}$ with a single random
trial per cell.

\begin{table}[t]
\centering
\caption{Adversarial RCT benchmark (single random trial per cell;
2026-05-21).  TIMEOUT$=$exceeded the $60$\,s per-job budget.
Amplitudes match across non-timeout cells to $10^{-5}$.}
\label{tab:rct}
\begin{tabular}{rrrrrr}
\toprule
$n$ & $t$ & \textsf{SF} (s) & \textsf{Rec} (s) & qsim (s) & qsim/Rec \\
\midrule
12 & 5  & 0.0367 & 0.0098 & 0.0181 & 1.85$\times$ \\
12 & 10 & 0.4333 & 0.0022 & 0.0116 & 5.41$\times$ \\
12 & 20 & TIMEOUT & TIMEOUT & 0.0096 & --- \\
16 & 5  & 0.1276 & 0.0246 & 0.0185 & 0.75$\times$ \\
16 & 10 & TIMEOUT & TIMEOUT & 0.0427 & --- \\
16 & 20 & TIMEOUT & TIMEOUT & 0.0260 & --- \\
20 & 5  & TIMEOUT & TIMEOUT & 0.0247 & --- \\
20 & 10 & TIMEOUT & TIMEOUT & 0.0214 & --- \\
20 & 20 & TIMEOUT & TIMEOUT & 0.0214 & --- \\
\bottomrule
\end{tabular}
\end{table}

Observations:
\begin{itemize}
  \item Quipu-cut wins clearly at $n{=}12, t \in \{5, 10\}$ (up to
        $5.4\times$ qsim) but loses narrowly at $n{=}16, t{=}5$
        ($0.75\times$ qsim).  This is the crossover regime predicted
        by Proposition~\ref{prop:crossover}: when $d \approx n/2$ and
        per-leaf T-counts remain moderate, qsim's dense-state
        simulation is competitive.
  \item Beyond $n{=}16, t{=}10$ quipu-cut runs out of budget at the
        leaves.  The state-vector option would activate at sufficiently
        large $T_{\mathrm{leaf}}$ but the cell sizes here keep the
        leaves in the borderline regime where frame growth is
        polynomial in the budget and the threshold check has not
        triggered.
  \item qsim handles all $9$ RCT cells in milliseconds: this is the
        regime where state-vector simulation is the right tool, and
        quipu-cut should not be the recommended simulator.
\end{itemize}

The pattern is consistent with the algorithm's design intent: quipu-cut
exploits structure where it exists and degrades to its leaf simulator
(or, via adaptive selection, to dense) when it does not.  The recommendation is
to use it where the workload exhibits T-clustering (hierarchical
algorithms, fault-tolerant magic-state circuits, sub-routine-heavy
quantum programs) and to fall back to qsim or a similar simulator
where T-gates are unstructured.

\subsection{Corpus-wide validation}\label{sec:eval-corpus}

The two evaluations above are self-contained but narrow: one
circuit family at one qubit count each.  A companion effort built a
$571$-cell corpus spanning ten workload classes --- unstructured
random circuits, chemistry hardware-efficient ans\"atze, dense-chaos
constructions (Sachdev--Ye--Kitaev, Maldacena--Qi wormhole
teleportation), 1D-local Trotter evolution, Toffoli-heavy
fault-tolerant decompositions, and the two families evaluated above
--- to check whether the pattern found on hierarchical and RCT
circuits generalises.  Full corpus construction and per-class
methodology are outside this paper's scope; we report the results
here because they answer a question this paper's own evaluation
cannot: whether backend mixing is a real, exercised capability or
only a theoretical one.

\begin{table}[t]
\centering
\caption{Corpus-wide comparison over all $571$ cells (amplitude and
sampling).  ``Win'' is the fraction of completed cells on which
quipu-cut is faster; ``Geomean'' the geometric-mean speedup; ``DNF''
the cells the baseline left unsolved (timeout or out-of-memory) that
quipu-cut completed.}
\label{tab:corpus}
\begin{tabular}{lrrr}
\toprule
Baseline & Win rate & Geomean & DNF \\
\midrule
Aer (statevector) & $97.4\%$ & $18.3\times$ & $90$  \\
NWQ-Sim           & $77.7\%$ & $3.8\times$  & $127$ \\
qsim              & $56.0\%$ & $2.2\times$  & $2$   \\
quimb (MPS)       & $94.4\%$ & $94\times$   & $481$ \\
Maestro           & $73.4\%$ & $3.6\times$  & $73$  \\
\bottomrule
\end{tabular}
\end{table}

The scheme takes the large majority against every one of five
external baselines (Table~\ref{tab:corpus}); the $18.3\times$
geomean against Aer is an order of magnitude below the $376\times$
reported below on the most structurally regular class, consistent
with most corpus classes being less ideally structured than that
one.

\paragraph{Does the algorithm actually mix backends?}  This is the
question Section~\ref{sec:parallel}'s parallelism analysis assumes
an answer to.  Leaf-level routing traces over the corpus runs give
one: at sizes where a whole-circuit backend still fits ($n \le 32$,
where $434$ of $551$ non-fast-path cells complete), mixing never
happens --- every completing cell resolves to a single backend
(\textsc{frame} $211$, \textsc{dense} $170$, \textsc{mps} $53$;
\textsc{rank} is never selected at the top level on this corpus).
The corpus-wide margin in Table~\ref{tab:corpus} comes entirely from
backend \emph{selection}, not intra-circuit mixing; the cross-cut
decomposition stays in reserve.  On a $405$-circuit hierarchical
suite built specifically to probe this ($n \in \{16,20,24,28,32\}$,
controlled cross-block degrees), $19.3\%$ of circuits do execute two
distinct backends across their leaves, peaking at mid-size deep
cuts.  Past the dense ceiling the picture inverts: on a $105$-circuit
suite at $n \in \{36,40,44\}$ built from deliberately mismatched
halves (a frame-friendly half, a dense-forcing half, a 1D
nearest-neighbour half), $93\%$ of the heterogeneous-half cells
execute two distinct backends, and $104$ of $105$ cells complete
under a $600$s budget.  Reaching that completion rate required the
same work-budget hardening of the pilot's decide rule described in
Section~\ref{sec:densefallback}: the frame-count cap alone bounds
abort cost only relative to $2^n$, which stops being a meaningful
bound once no whole-circuit backend fits; capping on a fixed work
budget instead converted $10$ of $11$ pilot-burn timeouts into
completions.  Backend mixing is therefore real and measured, not
speculative --- but it is a capability held in reserve at corpus
scale, not the source of the corpus-wide numbers above.

\paragraph{A structurally regular class: HaPPY-style codes.}  We
constructed a $66$-cell benchmark of circuits built from chained
$[[5,1,3]]$ perfect-tensor
encoders~\cite{pastawski2015holographic}, $n \in \{5,9,13,17,21,25,29\}$
with T-injection count $t$ scaled to the encoder count, three seeds
per $(n,t)$ pair.  This construction gives an unusually regular,
minimal cross-gate degree at every level of the cut tree by
design --- the closest thing in the corpus to a worked example of
Proposition~\ref{prop:single-cost}'s $2^d$ factor staying small
regardless of $n$.  The scheme routes all $66$ cells by the
cross-gate-degree predicate alone, with no class-specific tuning,
against Qiskit Aer's full four-backend menu (best completing time
per cell) and against Maestro's exact-amplitude endpoint, both timed
in-process, warmed, and amortized (build once, time only the
per-amplitude compute).

\begin{table}[t]
\centering
\caption{The $66$-cell HaPPY-style benchmark.  ``vs.\ Aer'' and
``vs.\ Maestro'' are per-$n$ geometric-mean speedups; quipu-cut wins
every $n$ against Aer, so a winner column is only needed for the
Maestro comparison, where it does not always win.  Aer figures are
best-of-four-backends per cell; the $n{=}13$ dip against both
baselines is a small-$n$ dead zone where the stabilizer-frame
gate-application cost is not yet offset by its compression
advantage.}
\label{tab:hqec}
\begin{tabular}{rrrrr}
\toprule
$n$ & Cells & vs.\ Aer & vs.\ Maestro & Faster vs.\ Maestro \\
\midrule
 5 &  6 & $195\times$   & $4.1\times$  & quipu-cut \\
 9 &  6 & $37\times$    & $1.8\times$  & quipu-cut \\
13 &  9 & $6.4\times$   & $0.30\times$ & Maestro ($3.4\times$) \\
17 &  9 & $32\times$    & $4.0\times$  & quipu-cut \\
21 & 12 & $304\times$   & $7.5\times$  & quipu-cut \\
25 & 12 & $2865\times$  & $1496\times$ & quipu-cut \\
29 & 12 & $36595\times$ & $2496\times$ & quipu-cut \\
\midrule
\textbf{All} & \textbf{66} & $\mathbf{376\times}$ & $\mathbf{27.7\times}$
  & quipu-cut $57$ / Maestro $9$ \\
\bottomrule
\end{tabular}
\end{table}

Against Aer the scheme wins all $66$ cells; against Maestro, a
dispatcher that also routes away from dense on this profile, it wins
$57$ of $66$, losing only the nine $n{=}13$ dead-zone cells where
both sides stay sub-millisecond.  Leaf-level routing traces confirm
the mechanism directly: every $n \ge 13$ cell runs whole-circuit
stabilizer frames, cross-cut decomposition held in reserve, exactly
the regime Table~\ref{tab:leafbackends}'s frame row targets.  This
class is the sharpest empirical case for the cost model precisely
because its structure is so regular that the router's job is nearly
trivial --- which is also why it is the least representative class
in the corpus, and why the narrower, harder evaluations earlier in
this section remain the paper's primary evidence.

\paragraph{Cross-validation.}  On ten representative cells (one per
workload class), amplitude agreement against Aer, Maestro, and
NWQ-Sim is $|\Delta a| < 10^{-8}$; sampled per-qubit marginals agree
to within $0.020$ against all three, inside the $4\sigma$ noise plus
truncation budget of $0.035$.  One methodology lesson from this
process is worth stating plainly for anyone benchmarking against a
high-level dispatcher: Maestro's shot-sampling endpoint silently
returns all-zero shots, with no simulation performed, when the input
circuit lacks explicit measurement instructions.  This went uncaught
in an earlier pass of ours and inflated apparent speedups by
$5$--$577\times$ on chemistry ans\"atze, until we verified per-qubit
marginals were non-trivial on a known-correct input.  We recommend
that any head-to-head benchmark against a high-level dispatcher
verify per-qubit marginals are non-trivial on a known-correct test
input before reporting wall-clock comparisons.


\section{Discussion}\label{sec:discussion}

We discuss the algorithm's limitations, the effective frame dimension
that governs its leaf cost, its place among related approaches, and
directions for future work.

\subsection{Limitations}

\paragraph{T-gate restriction.}  Quipu-cut as presented handles
discrete T-gates and the Clifford group exactly.  Continuous
single-qubit rotations $R_z(\theta)$ require either decomposition
into a T-gate sequence (Solovay--Kitaev) or an extension of the leaf
simulator to handle small-angle gates natively.  The Schmidt
decomposition of cross-cut gates is also currently rank-2; rank-3
cross-cut gates (some Toffoli decompositions) would multiply the
per-path cost by $3$ instead of $2$.

\paragraph{Memory at the leaves.}  The dense state-vector option uses
$O(2^{n_{\text{leaf}}})$ memory per leaf, capped by the leaf's local
qubit count.  For $n_{\text{leaf}} \leq 22$ this is comfortable on
commodity hardware; beyond that memory becomes the binding
constraint and the leaf simulator must be replaced (e.g., with a
GPU-backed state-vector engine).  We have not implemented this.

\paragraph{Single-amplitude versus full-state queries.}  Quipu-cut
computes a single computational-basis amplitude per query.  Querying
all $2^n$ amplitudes is $2^n$ quipu-cut calls---uncompetitive against
state-vector simulation.  The right use case is one of: a single
amplitude (verification, debugging), a small batch (
$K \ll 2^n$, expectation-value computation, fidelity-style queries),
or sampled access (single shots from a distribution, but with
caveats from the cross-cut path sum's interference structure).

\paragraph{Partition quality.}  Kernighan--Lin bisection is heuristic.
Beyond the candidate pool the selector already searches (KL, spectral
bisection, and multi-start KL; Section~\ref{sec:cost}), we have not
investigated heavier graph-aware methods (METIS, treewidth-aware cuts),
which for some circuit graphs could reduce $d$ by a further constant
factor.  In the structured workloads of Section~\ref{sec:eval-hier}, KL
was sufficient.

\paragraph{Slice-all on high cross-cut density (resolved).}\label{sec:lim-sliceall}
Algorithm~\ref{alg:rec} as presented in Section~\ref{sec:algorithm}
Schmidt-decomposes every cross-cut gate, paying the full $2^d$ outer
Feynman factor even when inlining some gates via a temporary joint
tableau (exact for Cliffords on stabilizer states) would be cheaper.
This is optimal on the hierarchical workloads of
Section~\ref{sec:eval-hier} (where $d$ is small by construction), but
it is catastrophic when $d$ is large.  An offline op-count predictor
that enumerates partial-slice subsets and selects the predicted
minimum shows that on adversarial random Clifford$+$T inputs at
$n \in \{12, 16\}, t{=}20$ (where Kernighan--Lin bisection yields
$d \in [20, 30]$), slice-all is up to $7.7 \times 10^6\!\times$ more
expensive than the predicted optimal partial-slice strategy.  These
are the same circuits that exceed the $60$\,s budget in
Table~\ref{tab:rct}: the $2^d$ outer factor is the binding constraint
that the budget hits.  This analysis motivated a runtime
slice-vs-inline selector, implemented and shipped since (the hybrid
amplitude computation of Section~\ref{sec:future-cot-fut} maintains
a temporary joint tableau for inlined cross-cut gates and re-factors
back to split form, with an online variant that makes the decision
per gate rather than from a static offline schedule); what remains
open is a realised (as opposed to predicted) wall-clock measurement
of the shipped selector on this same adversarial suite.

\subsection{Effective frame dimension}\label{sec:feff}

The per-leaf cost in every cost
expression of Section~\ref{sec:cost} is gated by the \emph{effective
frame dimension} $\Feff$---the number of frame elements that survive
after all cofactor merges on the leaf's local T-gate sequence.
Because $\Feff$ ultimately determines when the cut is profitable, it
deserves a separate look.

\paragraph{Definition.}  Recall from
Section~\ref{sec:stabframe} that a stabilizer frame on $m$ qubits
represents $\ket{\psi}$ as a sum $\sum_{i=1}^{F} c_i P_i
\ket{\psi_S}$ of phase-corrected stabilizer states sharing a tableau
$\ket{\psi_S}$.  Each T-gate on qubit $q$ cofactors the frame: every
element $c_i P_i \ket{\psi_S}$ splits into two new elements indexed
by the eigenvalue branch of the $T$-projection, doubling $F$ in the
worst case.  A subsequent merge pass identifies pairs
$(c_i P_i, c_j P_j)$ for which $P_i$ and $P_j$ generate identical
stabilizer cosets relative to $\ket{\psi_S}$ (after gauge-fixing
against the row span of the tableau); these collapse into a single
element with summed coefficients.  The \emph{effective frame
dimension}
\[
  \Feff(C) \;:=\; \text{the value of } F \text{ after all merges on circuit } C
\]
is bounded above by $2^t$ and below by $1$, and is what drives the
$O(\Feff \cdot n^2)$ per-amplitude factor in
Propositions~\ref{prop:single-cost}--\ref{prop:rec-cost}.

\paragraph{Empirical trajectories.}  Across the structured
hierarchical circuits of Section~\ref{sec:eval-hier}, monolithic
$\Feff$ at $t{=}6$ approaches the worst-case $2^6 = 64$, consistent
with the $\sim 61$\,s SF wall-clock time that scales as expected against the
$t{=}4$ baseline.  On adversarial RCT circuits, measured peak
$\Feff$ over two trials per cell (\texttt{bench/near-clifford/rct/},
2026-05-08) shows two characteristic patterns
(Table~\ref{tab:feff-rct}).

\begin{table}[t]
\centering
\caption{Measured peak $\Feff$ on adversarial RCT circuits (two
trials per cell, range reported).  Frame saturation at $\sim 2^n$
(the Hilbert space ceiling) is the fundamental cap: the frame cannot
encode more independent stabilizer cosets than the dimension of the
underlying space.}
\label{tab:feff-rct}
\begin{tabular}{rrrr}
\toprule
$n$ & $t$ & $\Feff$ peak (trials) & $2^t$ ceiling \\
\midrule
12 & 5  &       32, 32          &        32 \\
12 & 10 &     128, 260          &  1{,}024 \\
12 & 20 &     256, 1731         & 1{,}048{,}576 \\
12 & 30 &    1600, 3326         & $\sim 10^9$ \\
12 & 50 &    3544, 4096         & $\sim 10^{15}$ \\
16 & 20 &   7820, 16510         & 1{,}048{,}576 \\
20 & 20 &   8128, 55064         & 1{,}048{,}576 \\
\bottomrule
\end{tabular}
\end{table}

\begin{figure}[t]
\centering
\begin{tikzpicture}
\begin{semilogyaxis}[
  width=0.85\linewidth,
  height=7.0cm,
  xlabel={T-count $t$},
  ylabel={$\Feff$ (peak frame elements)},
  xtick={5,10,20,30,50},
  legend pos=outer north east,
  legend cell align=left,
  ymajorgrids,
  grid style={dashed,gray!30},
  log basis y=10,
  ymin=4, ymax=2e15,
]
\addplot[mark=none, dashed, thick, color=black!70, domain=5:50, samples=92] {2^x};
\addlegendentry{$2^t$ worst case}
\addplot[mark=none, dotted, color=red!50, domain=5:50] {256};
\addplot[mark=none, dotted, color=orange!60!black, domain=5:50] {4096};
\addplot[mark=none, dotted, color=teal, domain=5:50] {65536};
\addplot[mark=none, dotted, color=violet, domain=5:50] {1048576};
\addplot[mark=*, thick, color=red, mark options={fill=red}]
  coordinates {(5, 11.2) (10, 57.6) (20, 256) (30, 230.4) (50, 256)};
\addlegendentry{$n{=}8$ ($2^n{=}256$)}
\addplot[mark=square*, thick, color=orange!85!black, mark options={fill=orange}]
  coordinates {(5, 27.2) (10, 172.8) (20, 1280) (30, 4096) (50, 4096)};
\addlegendentry{$n{=}12$ ($2^n{=}4{,}096$)}
\addplot[mark=triangle*, thick, color=teal, mark options={fill=teal!70}]
  coordinates {(5, 28.8) (10, 576) (20, 11059)};
\addlegendentry{$n{=}16$ ($2^n{=}65{,}536$)}
\addplot[mark=diamond*, thick, color=violet, mark options={fill=violet!60}]
  coordinates {(5, 32) (10, 461) (20, 44237)};
\addlegendentry{$n{=}20$ ($2^n{\approx}1\text{M}$)}
\end{semilogyaxis}
\end{tikzpicture}
\caption{Measured peak $\Feff$ on adversarial RCT circuits at four
qubit counts (mean over 5 trials per $(n, t)$ cell; 2026-05-25
sweep, $80$ circuits total).  At small $t$ all four curves track the
worst-case $2^t$ diagonal because uniform random T-placement
maximizes cofactor independence---no merging savings materialize.
Each curve then bends and flattens against its Hilbert ceiling
$2^n$: $n{=}8$ saturates by $t{=}20$, $n{=}12$ by $t{=}30$; $n{=}16$
and $n{=}20$ have not yet saturated at the largest $t$ we ran safely.
Saturation caps the per-amplitude cost $O(\Feff \cdot n^2)$ at the
dense state-vector cost, which is valuable in principle but
eliminates the frame's advantage over dense simulation in the
saturated regime.}
\label{fig:feff-rct}
\end{figure}

First, at $t \leq n$, $\Feff$ tracks $2^t$ closely: random T-placement
produces cofactor branches whose correction Paulis generate distinct
stabilizer cosets and no merging savings materialize.  Second, beyond
$t > n$, $\Feff$ saturates at $\sim 2^n$---the Hilbert space ceiling.
This ceiling does not help in practice: the per-amplitude cost
$O(\Feff \cdot n^2)$ stays polynomial in $n$ but the constant
$\Feff \approx 2^n$ matches dense, eliminating the stabilizer-frame
advantage.

\paragraph{Why merging succeeds on structured circuits.}  The
hierarchical workloads of Section~\ref{sec:eval-hier} cluster T-gates
inside contiguous qubit sub-registers separated by Clifford layers.
Cofactor branches from a T-gate sequence acting on a local
sub-register encode correction operators supported on that
sub-register; subsequent Clifford layers permute these supports but
preserve their group structure.  When two cofactor branches produce
correction Paulis that differ by a stabilizer of $\ket{\psi_S}$,
they collapse: the local entanglement structure created by clustered
T-gates is precisely the structure that enables coset coincidence.
The same intuition explains why fault-tolerant magic-state
gadgets~\cite{bravyi2018simulation}, hierarchical
algorithms~\cite{coppersmith1994approximate}, and most quantum
sub-routines admit $\Feff \ll 2^t$: their T-gate placement preserves
the local sub-register structure that merging requires.  On RCT,
T-gates are scattered across qubits with no such locality; cofactor
branches diverge in stabilizer-coset space and never coincide.

This is the structural property that quipu-cut exploits: a circuit
with clustered T-gates also admits a low-cross-cut bipartition (the
bisection cuts between, not through, T-clusters), so both per-leaf
$\Feff$ and cross-cut count $d$ remain small.  T-clustering and
bisection-friendliness are the same structural notion.

\paragraph{Measured merge rate.}  The account above is qualitative;
it is straightforward to instrument the merge pass directly and
measure how often it actually fires.  We added a cumulative counter
to the cofactoring implementation that classifies every candidate
frame element as either a fresh insertion or a collision merge, and
ran it on the $n{=}16$ hierarchical circuits of
Section~\ref{sec:eval-hier} and on RCT at matched $n$.  The merge
rate is not a fixed constant of the workload class --- it grows with
T-count within the same structural family, from $0\%$ at $t{=}2$
(too few T-gates for any collision opportunity) to $19.2\%$ at
$t{=}4$ to $26.0\%$ at $t{=}6$ ($6{,}144$-element frame, $25{,}230$
cofactor operations).  On RCT at the same $n{=}16$, the rate is
$0\%$ at $t \in \{5, 10, 20\}$, the full range we measured: the
mechanism identified above requires two branches' correction
operators to coincide, and scattered T-placement makes that
coincidence combinatorially unlikely, matching the $\Feff \to 2^t$
saturation already reported
in Table~\ref{tab:feff-rct}.  A second, distinct mechanism also
suppresses frame growth without any collision at all: a T-gate whose
target qubit is already deterministic under the shared tableau
contributes only a phase, never a split, regardless of the frame's
other contents (a $T \cdot T = S$-style absorption is the special
case where this determinism comes from an earlier T-gate on the same
qubit).  Both mechanisms are exact --- no truncation, no approximation
--- and both are already counted into the merge-pass description of
this section; the measurement above isolates the collision mechanism
specifically because it is the one sensitive to \emph{which} branches
happen to coincide, and is therefore the one worth asking whether it
can be relied upon or merely stumbled into.

\begin{proposition}[Schedule-invariant interference
collapse]\label{prop:schedule-invariance}
Let $q_1, \dots, q_k$ be pairwise-distinct qubits of a leaf circuit at
which a T-gate is applied, with no gate between any two of these
applications acting on more than one of $q_1, \dots, q_k$.  Then the
stabilizer frame produced by applying $T_{q_1}, \dots, T_{q_k}$ ---
both its element count $F$ and its full multiset of
(correction-operator, amplitude) pairs --- is independent of the
order in which the $k$ T-gates are processed.
\end{proposition}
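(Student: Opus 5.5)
The plan is to reduce the claim to commutativity of single-qubit diagonal gates on distinct qubits, and then show that the frame's cofactor-and-merge procedure is a function of the resulting state only. I read the hypothesis as saying that the $k$ T-gates may be reordered past the intervening gates without changing the operator they implement. Since each $T_{q_j}$ is diagonal and the intervening gates each touch at most one of the $q_i$, the product $T_{q_k}\cdots T_{q_1}$ (with the interleaved gates fixed) is the same unitary under any permutation of the T-gates. So the final state $\ket{\psi}$ is schedule-independent. The real content of the proposition is the stronger claim that the frame representation $\sum_i c_i P_i\ket{\psi_S}$, as a multiset, is also schedule-independent.

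\textbf{Step 1: write the frame produced by one T-gate in closed form.} Write $T_q = \tfrac{1+e^{i\pi/4}}{2} I + \tfrac{1-e^{i\pi/4}}{2} Z_q$. Cofactoring an element $c\,P\ket{\psi_S}$ yields the two candidates $\tfrac{1+e^{i\pi/4}}{2}\,c\,P\ket{\psi_S}$ and $\tfrac{1-e^{i\pi/4}}{2}\,c\,Z_qP\ket{\psi_S}$. The Clifford gates in between act only on the shared tableau, conjugating the $P_i$ consistently.

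\textbf{Step 2: apply all $k$ T-gates.} The frame before merging is
\[
  \sum_i \sum_{S\subseteq\{1,\dots,k\}} c_i \Bigl(\prod_{j\notin S}\beta_0\Bigr)\Bigl(\prod_{j\in S}\beta_1\Bigr) \Bigl(\prod_{j\in S} Z_{q_j}\Bigr) P_i \ket{\psi_S},
\]
with $\beta_0 = \tfrac{1+e^{i\pi/4}}{2}$ and $\beta_1 = \tfrac{1-e^{i\pi/4}}{2}$. The $Z_{q_j}$ act on distinct qubits and commute, and the scalars commute. Both are conjugated identically by any intervening Clifford, because such a Clifford touches at most one $q_j$ and conjugates $Z_{q_j}$ the same way whether it comes before or after the other T-gates. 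Hence this pre-merge multiset, indexed by $(i,S)$, is independent of the processing order.

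\textbf{Step 3: show the merge pass is order-independent.} The merge pass identifies elements whose correction operators coincide modulo the stabilizer group of $\ket{\psi_S}$, with phases gauge-fixed against the tableau row span, and sums their coefficients. This is a partition of the multiset into equivalence classes, followed by summation within each class. Both the partition and the sums are insensitive to insertion order, because the equivalence relation is defined from the tableau alone and addition is commutative. The tableau is itself unchanged by T-cofactoring. The paper's second, collision-free mechanism also fits: a T on a qubit already deterministic under the tableau contributes only a phase. That determinism is again a property of the tableau, which is shared and schedule-independent, so that mechanism also commutes.

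\textbf{Main obstacle.} The hard step is Step 3 in the implementation's actual form. Merging interleaved with cofactoring could in principle depend on order, for example if zero-coefficient elements are pruned or canonical representatives of classes are chosen by first arrival. I would handle this by proving that merging after each T-gate equals merging once at the end, since class sums are additive and the Step 1 map is linear on classes. I would also fix a canonical class representative via the gauge-fixed reduced form, so that even the stored correction operator, and not just the class, is order-independent.
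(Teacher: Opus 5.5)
Your proposal is correct and follows essentially the paper's route: the T-gates commute, so the state is the same; the $2^k$ branch terms are grouped by correction coset relative to the shared tableau and summed; and grouping and summing do not depend on enumeration order. Your explicit $\beta_0 I + \beta_1 Z_q$ branching on a fixed tableau, together with the linearity argument that merging after each T-gate agrees with merging once at the end, supplies the step the paper only asserts, namely that the incremental cofactoring procedure computes this same grouped sum.
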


\begin{proof}
Since $q_1, \dots, q_k$ are pairwise distinct and no intervening gate
couples two of them, $T_{q_1}, \dots, T_{q_k}$ commute as operators;
any processing order realises the same unitary and hence the same
final state $\ket{\psi}$.  The disjointness hypothesis also keeps
each qubit's cofactor-vs-phase-only status --- checked once against
the shared tableau $\ket{\psi_S}$ --- unaffected by processing any of
the others, since they act on independent subsystems throughout.
The frame obtained by exhaustively expanding all $2^k$
measurement-outcome combinations, grouping by resulting correction
coset, and summing amplitudes within each group is exactly the
stabilizer-frame representation of $\ket{\psi}$: grouping and summing
are commutative and associative, so this object does not depend on
the order the $2^k$ combinations are enumerated in.  The incremental
cofactoring procedure computes this same grouped sum one qubit at a
time; its result after all $k$ qubits is therefore the
order-independent object above, regardless of processing order.
\end{proof}

\noindent We confirmed this directly rather than relying on the
argument alone: permuting the mutually-independent T-gates of a
single circuit layer, both before and after the entangling gates
that later couple them are applied, left the merge count exactly
unchanged across every ordering tested.  The practical consequence
is that $\Feff$ is a property of the leaf's circuit, not of an
incidental choice in how commuting T-gates happen to be ordered ---
which is what treating $\Feff$ as a function of the circuit alone,
as done throughout Sections~\ref{sec:cost} and~\ref{sec:densefallback},
requires.  It also forecloses one candidate lever for reducing
$\Feff$ further: reordering commuting T-gates cannot help, so any
future compression must come from changing the circuit's
decomposition itself, not its schedule.

\paragraph{$\Feff$ vs bond dimension $\chi$.}  The natural
alternative complexity measure for near-Clifford simulation is the
matrix-product-state bond dimension $\chi$ used by stabilizer tensor
networks~\cite{masotllima2024stabilizer}.  Both $\Feff$ and $\chi$
count an exponential resource that grows with non-Clifford content
and shrinks with structural locality, but they parameterize different
things.  $\Feff$ counts \emph{stabilizer cosets} in the linear
expansion of $\ket{\psi}$; $\chi$ counts the Schmidt rank across each
bond of a 1-D MPS encoding amplitude correlations relative to a
single global tableau.

The two coincide in their extremes ($\Feff = \chi = 1$ on pure
stabilizer states, $\Feff = \chi = 2^n$ on Haar-random states) but
diverge in between.  $\chi$ is naturally tight on \emph{1-D
geometrically local} non-Clifford circuits, where entanglement
entropy across each cut grows logarithmically with circuit depth.
$\Feff$ is naturally tight on \emph{T-cluster-structured} circuits,
where cofactor coincidence compresses the representation regardless
of geometric arrangement.  The two classes overlap (1-D T-block
circuits are both small-$\chi$ and small-$\Feff$) but neither
contains the other: a globally non-local circuit with clustered
T-gates is small-$\Feff$ but large-$\chi$; a 1-D circuit with
spatially diffuse T-gates is small-$\chi$ but large-$\Feff$.  Per-gate
costs differ structurally as well: an MPS bond costs $O(\chi^2)$ per
gate via SVD truncation, whereas a stabilizer-frame Clifford update
costs $O(n^2)$ once and is amortized across all $F$ elements at no
extra cost.  A head-to-head benchmark separating these regimes is
reported in Section~\ref{sec:stn-comparison}.

\subsection{Related approaches}\label{sec:related-approaches}

Spatial partitioning of the qubit set with a Feynman path integral
over cross-cut gates originates with Markov and
Shi~\cite{markov2008simulating} for general tensor-network
contractions; Pednault et al.~\cite{pednault2017pareto} cast the same
construction as tensor index slicing, and qsim~\cite{qsim2020} ships a
state-vector instantiation deployed in the Sycamore verification.
Herzog et al.~\cite{herzog2025joint} reduce the Feynman branching
factor by grouping cross-cut gates into blocks and Schmidt-decomposing
them jointly rather than one at a time---a dense-leaf counterpart to
the merge-or-decompose decision our cost model makes per cross-cut gate
(Section~\ref{sec:cost}).
What is new in the present work is the leaf simulator: each half-half
sub-circuit is evolved by a stabilizer frame rather than a dense state
vector, transferring the per-leaf cost burden from the qubit count
$n_{\text{leaf}}$ to the leaf's local T-count.  Huang and
Love~\cite{huang2021feynman} considered a stabilizer-based variant of
the same partitioning, using a stabilizer-projector contraction as
the leaf oracle; the present work differs in the leaf representation,
the multilevel extension, the adaptive leaf representation, and the empirical
characterization.

A different line of work compresses the magic structure of the state
itself.  Bravyi and
Gosset's~\cite{bravyi2016improved,bravyi2018simulation} stabilizer
rank decomposes a $t$-T-gate state into at most $2^{0.48 t}$
stabilizer terms (asymptotically $2^{0.396 t}$); this is orthogonal to
our spatial partition and is folded into the leaf oracle as the
\textsc{rank} backend of Table~\ref{tab:leafbackends}
(Section~\ref{sec:densefallback}) rather than used as a competing
whole-circuit method.

Stabilizer Tensor Networks (STN)~\cite{masotllima2024stabilizer} and
the magic-state-injection augmentation~\cite{mast2024} represent
near-Clifford states as a single global stabilizer tableau coupled to
a 1-D matrix product state of amplitude correlations.  STN globalizes
the tableau and pushes magic into MPS bond dimension; our approach
splits the tableau across two independent leaves and pushes cross-cut
Clifford entanglement into an enumerated Feynman bond.  STN admits
arbitrary single-qubit non-Clifford rotations; our algorithm is
currently specialized to discrete T-gates.  STN has limited
within-gate parallelism; our algorithm is embarrassingly parallel
along three nested independent axes (Section~\ref{sec:parallel}).
A head-to-head benchmark against STN is planned for a follow-up;
no comparable published data exists for the comparison.

Quasiprobability circuit cutting~\cite{peng2020simulating,
smith2023clifford} replaces a single wire with a sum over
measurement-and-preparation basis pairs, producing approximate
expectation values via $\sim 16^K$ independent fragment evaluations
for $K$ wire cuts.  Our cut is exact and amplitude-valued, with the
cross-cut Feynman sum running over $2^d$ rank-2 gate decompositions
rather than over basis prepares.  The two approaches address related
problems---partitioning quantum simulation work---from incomparable
mathematical bases.

\subsection{Future work}

\paragraph{Distributed implementation.}  Two of the three axes are
live.  Cell-level distribution --- fanning many independent circuits
across an AWS Batch/Fargate job queue --- is stood up and validated
end to end (correctness-grade, not yet timing-grade).  Distributing
a \emph{single} amplitude's Feynman path sum across workers, the
capability the Section~\ref{sec:parallel} parallelism analysis
predicts should scale linearly, has a design and orchestration
scaffold ready and needs one small hook into the amplitude driver
before it runs end to end; a coordinator with preemption-tolerant
retry and cost-aware bin-packing of leaves onto heterogeneous
instance types remains the harder, unstarted piece.

\paragraph{STN comparison.}\label{sec:stn-comparison}  Done, in both directions.  A head-to-head
benchmark against the external bsc-quantic Stabilizer Tensor Networks
implementation~\cite{masotllima2024stabilizer} cross-validates on
shared workloads, and a native C++ port of the STN construction sits
alongside quipu-cut's own leaf backends.  quipu-cut wins the low-T
regime by $200$--$390\times$ at matched $(n,t)$; STN wins the
high-bond regime, where its LAPACK-backed SVD outperforms quipu's
Jacobi solver.  Comparing bond dimension $\chi$ against effective
frame dimension $\Feff$ as a function of T-count distribution, as
originally planned here, remains a natural presentation for a
follow-up figure but is no longer open research.

\paragraph{Expectation-value variant.}  The Feynman-sum structure of
Algorithm~\ref{alg:rec} also supports an expectation-value query
$\bra{\psi} O \ket{\psi}$ via a paths-times-conjugate-paths
construction, with per-leaf Heisenberg-picture
simulation~\cite{garcia2015simulation} providing backward-light-cone
compression for sparse observables.  Outer cost rises from $2^d$ to
$4^d$, but the per-leaf compression for local observables is
substantial.

\paragraph{Adaptive slice-vs-inline selection (shipped; realised
measurement open).}\label{sec:future-cot-fut} The slice-all failure
mode of Section~\ref{sec:lim-sliceall} motivated a per-gate selector
that keeps low-$F$ cross-cut gates \emph{inline} (applied to a
temporary joint tableau and re-factored back to split, exact for
Cliffords on stabilizer states) and slices only the gates where the
per-leaf $F$-savings justify the $2\times$ outer factor --- the
tensor-network slicing problem~\cite{gray2021hyperoptimized}
specialised to a stabilizer cost model.  The offline op-count
prototype that produced the $7.7\times 10^6\!\times$ bound above
scores candidates by a cotengra-style amortised backward pass with a
calibration constant for join+refactor overhead, and caps the
worst-case gap at $\sim 60\times$ on the same adversarial workloads
while matching slice-all on the hierarchical workload.

Composing the slice-vs-inline selector with a search over alternative
partitions (Kernighan--Lin~\cite{kl1970}, Fiedler/spectral
bisection~\cite{pothen1990partitioning}, multi-start randomised KL)
in the style of cotengra's hyperoptimization further reduces
predicted cost: the offline op-count predictor, scored over $\sim
14$ candidate partitions per circuit, never exceeds the KL-restricted
optimum on the $148$-circuit suite and improves on it by a geometric
mean of $1.6\times$ across the workload mix.  This composite scheme
is a stabilizer-specific instantiation of standard combinatorial
multi-start search; it is not a new algorithm in itself, but
quantifies how much value remains on the table beyond the
slice-all heuristic of Algorithm~\ref{alg:rec}.

The hybrid Clifford applicator that converts these predicted savings
into realised wall-clock times --- switching between joint and split
tableau modes at runtime, both statically per a precomputed schedule
and, in an online variant, per gate as the circuit executes --- is
implemented.  What is not yet done is re-running the adversarial
suite above through the shipped selector to report a realised (rather
than offline-predicted) wall-clock gap; the offline bound is an
upper estimate of the achievable win, not a measurement of it.


\section{Conclusion}\label{sec:conclusion}

We have extended the stabilizer-frame
formalism~\cite{garcia2015simulation,garcia2014inner} to near-Clifford
circuits whose T-gates concentrate within identifiable sub-registers
by composing it with a spatial bipartition of the qubit set and a
Feynman path integral over cross-cut Clifford gates---a partitioning
technique long established in state-vector
simulation~\cite{markov2008simulating,pednault2017pareto,qsim2020}.
The composition replaces a single $2^t$-bounded stabilizer frame with
two smaller frames on disjoint sub-registers, runs in
$O(2^d \cdot F \cdot n^2)$ time under reasonable partition
assumptions, and is never asymptotically worse than direct dense
simulation thanks to adaptive leaf representation.  A
multilevel binary tree of bipartitions multiplies the per-leaf
compression further, and the algorithm's three nested independent
parallel axes (Feynman paths, per-path leaves, intra-frame work)
permit linear scaling across cloud workers with $O(1)$
communication.

Empirically the algorithm beats monolithic stabilizer-frame simulation
by 92--17{,}645$\times$ and state-of-the-art state-vector
simulation~\cite{qsim2020} by 1.29--47.85$\times$ on structured
$n{=}16$ near-Clifford circuits, while remaining deterministic and
exact.  On adversarial random Clifford$+$T inputs the algorithm
remains competitive at small qubit counts; at larger $n$ and high
cross-cut density, a naive slice-all policy's $2^d$ outer Feynman
factor becomes the binding constraint and the algorithm degrades to
its dense leaf---an offline op-count analysis
(Section~\ref{sec:lim-sliceall}) shows up to
$7.7 \times 10^6\!\times$ gap from the optimal partial-slice
strategy on $n \in \{12, 16\}, t{=}20$ inputs.  This motivated the
adaptive slice-vs-inline selector of
Section~\ref{sec:future-cot-fut}, now implemented and predicted to
cap the worst-case gap at $\sim 60\times$; a realised wall-clock
measurement of the shipped selector on this suite remains open.  The
extension thus widens the practical regime of stabilizer-frame
simulation from circuits with global $\Feff \ll 2^t$ to circuits
with per-half $\Feff \ll 2^{t/2}$, a substantially broader workload
class that covers the structured T-gate distributions of most
fault-tolerant and hierarchical algorithms.

\bigskip
\noindent\textbf{Reproducibility.}  The reference implementation, the
benchmark suite, and the raw measurement CSVs from
Section~\ref{sec:eval} are available at the project repository.  All
benchmark numbers in Section~\ref{sec:eval} were re-run on
2026-05-21.


\bmhead{Acknowledgments}

The author thanks Igor Markov for foundational work on stabilizer
frames and the inner-product algorithm, Andrew Cross for early
collaboration on stabilizer-state algorithms, and the quantum
circuits group at the University of Michigan for ongoing discussions.

\bibliography{quipu-bib}


\begin{appendices}

\section{Cost proofs}\label{app:cost-proofs}

We expand the proof sketches of
Propositions~\ref{prop:single-cost}--\ref{prop:never-worse}.
Notation matches the main text; polynomial factors in $n$ are
retained explicitly only where they affect the comparison.

\subsection*{A.1\quad Proof of Proposition~\ref{prop:single-cost}
(single-level cost)}

Algorithm~\ref{alg:quipu-cut} iterates over $\alpha \in
\{0,1\}^{d}$. Per iteration, three operations contribute:
\begin{enumerate}
  \item Construction of $\hat C_A, \hat C_B$ appends one
        single-qubit Clifford or projector per cross-cut gate---
        $O(d)$ time, dominated by what follows;
  \item two leaf amplitude calls, on the disjoint qubit subsets
        $A$ and $B$;
  \item a constant-time multiply-and-accumulate of
        $\mathrm{ph}_\alpha \cdot a_A \cdot a_B$ into $T$.
\end{enumerate}
The per-iteration cost is dominated by the two leaf calls.

The stabilizer-frame leaf on $m$ qubits with $G$ gates evolves a
frame of $F$ elements: Clifford gates cost $O(F \cdot m^2)$
amortized (the tableau update is $O(m^2)$ and acts on all elements
through a single rotation), and T~gates cost $O(F)$ per element
for the cofactoring split plus an $O(F \cdot m)$ merge pass.
The final amplitude readout is $O(F \cdot m^2)$ by the inner-product
algorithm of~\cite{garcia2014inner}. Summing yields per-leaf cost
$O(G \cdot F \cdot m^2)$ with $F \le 2^{T_{\text{leaf}}}$.
The dense state-vector leaf on $m$ qubits with $G$ gates costs
$O(G \cdot 2^m)$. Restricted to these two backends, the adaptive
selector (Section~\ref{sec:densefallback}) picks the cheaper, so
per-leaf cost is
\[
  O\!\left( G \cdot \min(F, 2^m / m^2) \cdot m^2 \right)
  \;=\; O\!\left( G \cdot \min(2^{T_{\text{leaf}}}, 2^m) \cdot m^2 \right),
\]
the $1/m^2$ vs $1/1$ comparison threshold being absorbed into the
asymptotic.

The mps and rank backends add two more candidates, each admissible
only when its error clears $\epsleaf$ (Theorem~\ref{thm:errorbudget}).
The mps leaf costs $O(G \cdot m \cdot \chi^2)$ for whatever bond
$\chi$ keeps the discarded SVD norm within $\epsleaf$; the rank leaf
costs $O(G \cdot \sum_{k \le h} \binom{T_{\text{leaf}}}{k} m^2)$ for
whatever Hamming-weight cap $h$ keeps the binomial tail within
$\epsleaf$.  Both costs are non-increasing in the size of the
admissible-error set, so restricting $\mathrm{Select}$ to the
error-feasible subset of $\{\textrm{frame}, \textrm{dense},
\textrm{mps}, \textrm{rank}\}$ and taking the minimum cost among
them --- rather than the two-way minimum above --- gives
$\mathrm{cost}_{\mathrm{leaf}}$ as stated in
Proposition~\ref{prop:single-cost}; the frame-vs-dense derivation
above is the special case where mps and rank are either unavailable
or not the minimizer.  Summing the two leaf costs and multiplying by
$2^d$ yields the bound in Proposition~\ref{prop:single-cost}.

\subsection*{A.2\quad Proof of Proposition~\ref{prop:rec-cost}
(multilevel cost)}

By induction on the depth $D$ of the partition tree.

\textit{Base case ($D = 0$).} The tree is a single leaf.
Algorithm~\ref{alg:rec} returns \textsc{LeafAmplitude}$(C, x)$,
whose cost is $\mathrm{cost}_{\mathrm{leaf}}(1)$. The Feynman
prefactor $2^{\sum_\ell d_\ell} = 2^0 = 1$ and the
leaf-product reduces to a single factor.

\textit{Inductive step.} Let the tree $T$ have depth $D + 1$ with
root partition giving cross-cut count $d_0$, and let $T_L, T_R$
denote its two subtrees, each of depth at most $D$. Each Feynman
path $\alpha \in \{0,1\}^{d_0}$ at the root invokes one recursive
call on $T_L$ and one on $T_R$. By the inductive hypothesis, the
$T_L$ call costs
\[
  O\!\left( 2^{\sum_{\ell \in T_L} d_\ell} \cdot \prod_{i \in T_L}
            \mathrm{cost}_{\mathrm{leaf}}(i) \right),
\]
and analogously for $T_R$. The per-iteration cost at the root is
the sum of the two subtree costs plus $O(1)$ for the multiply-and-
accumulate; the loop runs $2^{d_0}$ times.

The leaves of $T$ are the disjoint union of the leaves of
$T_L$ and $T_R$, so $\prod_{i \in T} \mathrm{cost}_{\mathrm{leaf}}(i)
= \prod_{i \in T_L} \mathrm{cost}_{\mathrm{leaf}}(i) \cdot
\prod_{i \in T_R} \mathrm{cost}_{\mathrm{leaf}}(i)$.
Since the maximum of the two subtree costs is bounded by their
product (each $\mathrm{cost}_{\mathrm{leaf}}(i) \ge 1$), the sum
is dominated by the product over all leaves of $T$, and the outer
$2^{d_0}$ factor combines with the subtrees' outer factors to give
$2^{\sum_{\ell \in T} d_\ell}$. This establishes the bound for $T$.

\textit{Tightness.} The bound charges every Feynman path with the
full multi-leaf product, which over-counts in unbalanced trees.
A tighter accounting expresses the cost as a sum over path-aligned
leaves; it agrees with the displayed bound up to a polynomial
factor in $L$ for balanced trees and is loose by at most $L$ for
unbalanced ones. The asymptotic analysis used for the empirical
comparison is unaffected.

\textit{Accuracy.} The induction above is agnostic to which of the
four backends of Table~\ref{tab:leafbackends} realises
$\mathrm{cost}_{\mathrm{leaf}}(i)$ at each leaf --- nothing in the
argument uses a leaf's identity beyond its cost.  The accuracy claim
$\epstot = \sqrt{2L}\,\epsleaf$ follows separately, by applying
Theorem~\ref{thm:errorbudget} with $N = L$: the tree's $L$ leaves are
exactly the leaves the theorem sums over, regardless of tree shape.

\subsection*{A.3\quad Proof of Proposition~\ref{prop:crossover}
(crossover condition)}

Direct dense simulation of an $n$-qubit, $G$-gate circuit costs
$O(G \cdot 2^n)$. Algorithm~\ref{alg:quipu-cut} with leaf costs
$\ell_A := \min(2^{T_A}, 2^{n_A})$ and
$\ell_B := \min(2^{T_B}, 2^{n_B})$ (ignoring polynomial factors)
costs $O(2^d \cdot G \cdot (\ell_A + \ell_B))$ by
Proposition~\ref{prop:single-cost}. Cutting is beneficial in time
iff
\begin{equation}
  2^d \cdot (\ell_A + \ell_B) \;<\; 2^n,
  \label{eq:crossover-raw}
\end{equation}
i.e.
\begin{equation}
  d \;<\; n - \log_2(\ell_A + \ell_B).
  \label{eq:crossover-rearr}
\end{equation}
The displayed inequality in the proposition rewrites this as
$d < n/2 - \log_2(\ell_A + \ell_B) + \log_2(2^n)/2$, which is
equivalent to~\eqref{eq:crossover-rearr} since
$\log_2(2^n) = n$.

For balanced partitions $n_A = n_B = n/2$ with dense-leaf choices
at both sides ($\ell_A = \ell_B = 2^{n/2}$),
$\log_2(\ell_A + \ell_B) = n/2 + 1$, and \eqref{eq:crossover-rearr}
collapses to $d < n/2 - 1$. The proposition's stated simplification
$d < n/2$ absorbs the $-1$ into the constant.

For mixed leaf choices (one stabilizer-frame, one dense), the per-
leaf costs differ: a frame leaf with $T_A \ll n_A$ contributes
$\ell_A = 2^{T_A} \ll 2^{n_A}$, increasing the crossover threshold
$d$ proportionally. In the limit where both leaves are frame-based
with $T_A = T_B \ll n$, the threshold becomes $d < n - T_A - 1$,
recovering the regime where deep partitions remain profitable.

Nothing in the derivation from~\eqref{eq:crossover-raw} onward
inspects which backend realises $\ell_A$ or $\ell_B$ --- it only
compares $2^d(\ell_A + \ell_B)$ against $2^n$.  Substituting
$\ell_A = \mathrm{cost}_{\mathrm{leaf}}(A)$,
$\ell_B = \mathrm{cost}_{\mathrm{leaf}}(B)$ from
Table~\ref{tab:leafbackends} (whichever of the four is cheapest
within $\epsleaf$ at each side, per Theorem~\ref{thm:errorbudget})
reproduces~\eqref{eq:crossover-rearr} unchanged; an mps or rank leaf
simply substitutes a smaller $\ell$, which can only raise the
crossover threshold further, never lower it.

\subsection*{A.4\quad Proof of Proposition~\ref{prop:never-worse}
(never asymptotically worse than dense)}

Let $C$ be an $n$-qubit, $G$-gate circuit. We give two arguments,
one for the degenerate tree and one for general trees under a
mild structural assumption on $d_\ell$.

\textit{Degenerate tree.} The single-leaf tree triggers no cuts and
falls through to a single \textsc{LeafAmplitude}$(C, x)$. The
adaptive selector (Section~\ref{sec:densefallback}) routes any leaf
with $T_{\text{leaf}} \ge n$---which includes the full circuit
when its global T-count exceeds $n$---to dense state-vector
evaluation, costing $O(G \cdot 2^n)$. For circuits with
$T_{\text{leaf}} < n$, the frame option is selected and costs
$O(G \cdot 2^T \cdot n^2) \le O(G \cdot 2^n \cdot n^2)$, which
matches the dense bound up to polynomial factors. The degenerate
tree thus achieves $O(G \cdot 2^n)$ on any input, using only the
frame/dense pair.  Admitting mps and rank as further candidates does
not threaten this: $\mathrm{Select}$ is a minimisation over cost, and
dense remains a member of the candidate set at every leaf regardless
of how many other backends are also available, so no choice
$\mathrm{Select}$ makes can exceed what dense alone would have cost.
The two extra backends are only ever selected because they are
cheaper, never because they are the only option.

\textit{General tree.} For a non-degenerate tree, the cost expression
from Proposition~\ref{prop:rec-cost} is bounded by $O(G \cdot 2^n)$
under the assumption that the per-level cross-cut count satisfies
$d_\ell \le n_\ell - 1$, where $n_\ell$ is the qubit count at the
node where the cut occurs. This assumption is natural---a partition
that creates more cross-cuts than the qubits it bisects is
strictly worse than no partition---and is satisfied by any
Kernighan--Lin-derived partition tree with balanced bisections.
Under this assumption, the product
$\prod_\ell 2^{d_\ell}$ telescopes:
\[
  \prod_\ell 2^{d_\ell} \;\le\; \prod_\ell 2^{n_\ell - 1}
  \;\le\; 2^{n/2 + n/4 + n/8 + \cdots} \;\le\; 2^n,
\]
and combined with the per-leaf dense bound $O(G \cdot 2^{n_i})$
summed across leaves (rather than producted---the tighter version
of Proposition~\ref{prop:rec-cost}) yields
$\sum_i G_i \cdot 2^{n_i} \le G \cdot 2^{\max_i n_i} \le G \cdot 2^n$.
Multiplying by the outer factor gives $O(G \cdot 2^n)$.

\textit{Adversarial trees.} If the partition tree violates
$d_\ell \le n_\ell - 1$ at some level---pathologically creating
cross-cuts that exceed the qubits they bisect---the bound can be
exceeded. The implementation enforces the assumption at the
partition selection step (KL bisection rejects partitions with
$d_\ell \ge n_\ell$), guaranteeing the bound in practice.

\section{Leaf-magnitude correctness}\label{app:magnitude-proof}

We formalize the magnitude-at-LEAF rule of
Section~\ref{sec:multilevel}.

\paragraph{Setup.}
Each rank-2 cross-cut decomposition for a CZ-style gate $U_{AB}$
acting on qubits $a \in A$ and $b \in B$ has the form
\[
  U_{AB} \;=\; \sum_{\alpha=0}^{1} P_a^\alpha \otimes V_b^\alpha,
\]
where $P_a^\alpha = |\alpha\rangle\langle\alpha|_a$ is a rank-1
projector on the $A$-side qubit and $V_b^\alpha$ is a Pauli
operator on the $B$-side (e.g., for CZ:
$V_b^0 = I_b, V_b^1 = Z_b$).  Acting on a normalized state
$|\psi\rangle$,
\[
  \| (P_a^\alpha \otimes V_b^\alpha) |\psi\rangle \|^2
  \;=\; \mathrm{tr}\!\left[ P_a^\alpha \rho^{(a)} \right]
  \;=:\; p_\alpha,
\]
where $\rho^{(a)} = \mathrm{tr}_{\neg a} |\psi\rangle\langle\psi|$
is the reduced state of qubit $a$.  When the qubit $a$'s reduced
state has equal eigenvalues on $|0\rangle, |1\rangle$ (e.g., qubit
$a$ is in $|+\rangle$ or maximally mixed),
$p_\alpha = 1/2$ and the normalization factor
$\sqrt{p_\alpha} = 1/\sqrt{2}$ separates the unit-norm
post-projection state from $P_a^\alpha |\psi\rangle$.

\paragraph{Single-level: magnitude at the cut.}
The exact amplitude $\langle x | C | 0^n \rangle$ expands via the
Schmidt decomposition as
\begin{equation}
  \langle x | C | 0^n \rangle
  \;=\; \sum_\alpha
    \underbrace{\langle x_A | \hat C_A^\alpha C_A | 0^A \rangle}_{=: A_\alpha}
    \cdot
    \underbrace{\langle x_B | \hat C_B^\alpha C_B | 0^B \rangle}_{=: B_\alpha},
  \label{eq:exact-amp}
\end{equation}
where $\hat C_A^\alpha$ and $\hat C_B^\alpha$ are the sequences
of $P_a^{\alpha_g}$-projectors and $V_b^{\alpha_g}$-Paulis from
the per-path side decomposition. The expansion contains no
$\sqrt{p_\alpha}$ factor: the Schmidt decomposition of CZ is exact
and unit-normalized.

The $\mathrm{ph}_\alpha$ factor in Algorithm~\ref{alg:quipu-cut}
arises only as a correction for the leaf simulator's
implementation of $P_a^{\alpha_g}$. A stabilizer-frame
implementation that handles the projector by measurement-and-
postselect collapses the qubit to the unit-norm $|\alpha\rangle$
and reports an amplitude scaled by $1/\sqrt{p_\alpha}$ relative
to the un-normalized $P_a^\alpha |\psi\rangle$. The corrective
$\mathrm{ph}_\alpha = \prod_g \sqrt{p_{g,\alpha_g}}$ re-injects the
missing factors, recovering~\eqref{eq:exact-amp} exactly.

\paragraph{Multilevel: magnitude at the leaves only.}
Algorithm~\ref{alg:rec} at an internal node of depth $\ell$
appends one $P_a^{\alpha_g}$ and one $V_b^{\alpha_g}$ to its
child sub-circuits per cross-cut gate $g$. The depth-$\ell$
node's $\mathrm{ph}_\alpha$, were it computed by the same rule
as in the single-level case, would track the
$\sqrt{p_{g,\alpha_g}}$ factors for projections that are
\emph{re-applied} at the leaf level: each projector descends the
tree as a single-qubit operator inside $\hat C_A$, eventually
arriving at the leaf simulator which collapses the same qubit
to $|\alpha_g\rangle$ as part of its own measurement-and-
postselect implementation.

If both the depth-$\ell$ node and the leaf simulator apply the
$\sqrt{p_{g,\alpha_g}}$ factor, the resulting amplitude is
multiplied by $p_{g,\alpha_g} = 1/2$ for each cross-cut at level
$\ell$, instead of by $\sqrt{p_{g,\alpha_g}} = 1/\sqrt{2}$. Summing
across all $d := \sum_\ell d_\ell$ cross-cut gates in the tree,
the multilevel algorithm without the rule returns amplitudes
scaled by $2^{-d/2}$ relative to the exact $\langle x | C | 0^n
\rangle$.

\paragraph{The rule.}
Decompose
\[
  \mathrm{ph}_\alpha \;=\; \mathrm{mag}_\alpha \cdot
                           \mathrm{sign}_\alpha,
  \qquad
  \mathrm{mag}_\alpha = \prod_g \sqrt{p_{g, \alpha_g}},
  \qquad
  \mathrm{sign}_\alpha = \prod_g \mathrm{sign}(V_b^{\alpha_g}).
\]
At an internal node of depth $\ell$, multiply child return values
only by $\mathrm{sign}_\alpha$; the $\mathrm{mag}_\alpha$
contribution is handled implicitly by the leaf simulator's
projection routine. At a leaf, the simulator's natural
measurement-and-postselect implementation contributes
$\mathrm{mag}_\alpha$ for the projectors in its local sub-circuit.

\begin{proposition}[Magnitude-at-LEAF correctness]
\label{prop:magnitude}
Under the rule above, Algorithm~\ref{alg:rec} computes
$\langle x | C | 0^n \rangle$ exactly for any
partition tree of any depth.
\end{proposition}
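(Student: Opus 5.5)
The plan is structural induction on the partition tree, with an invariant stronger than ``returns the right number.'' For a subtree $\tau$ with qubit set $Q_\tau$, sub-circuit $C_\tau$ (its own local gates plus the single-qubit projectors $P^{\alpha}$ and Paulis $V^{\alpha}$ appended by all ancestors along the current path), and target $x_\tau$, the invariant is
\[
  \textsc{RecAmplitude}(C_\tau,\tau,x_\tau) \;=\; \langle x_\tau \,|\, C_\tau \,|\, 0^{Q_\tau}\rangle ,
\]
where every projector inside $C_\tau$ is read as the \emph{unnormalized} operator $|\alpha\rangle\langle\alpha|$. Proving this at the root, where $C_\tau = C$ contains no inherited projectors, gives the proposition.

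The base case is a leaf, which runs \textsc{LeafAmplitude}. Its simulator implements each projector $P_a^{\alpha_g}$, whether local or inherited, by measurement-and-postselect. Under the rule it also multiplies in $\sqrt{p_{g,\alpha_g}}$ for every such projector in its local sub-circuit. That is the $\mathrm{mag}$ contribution, and it is exactly the correction argued in the single-level paragraph: postselection yields the normalized state $P|\psi\rangle/\sqrt{p}$, so rescaling by $\sqrt{p}$ recovers $P|\psi\rangle$. Iterating this over the ordered projector sequence, with each $p$ computed on the current unnormalized-then-renormalized state, makes the product of factors telescope to the true norm. Hence the leaf returns the unnormalized amplitude. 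I would spell this telescoping out as a short lemma: if $|\psi_k\rangle$ is the normalized simulator state after the $k$-th postselection, then $\prod_{j\le k}\sqrt{p_j}\,|\psi_k\rangle$ equals the exact unnormalized product applied to $|\psi_0\rangle$, with Clifford and T gates interleaved because they are unitary and linear. The zero-probability case is handled by setting $p=0$ and returning $0$.

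For the inductive step at an internal node with partition $(A,B)$ and cross-cut gates $g_1,\dots,g_{d}$, insert the exact rank-2 identity $U_{AB}=\sum_{\alpha}P_a^{\alpha}\otimes V_b^{\alpha}$ for each $g$ in circuit order. Because the local gates factor as tensor products, distributing the sums gives
\[
  \langle x_\tau|C_\tau|0\rangle=\sum_{\alpha}\mathrm{sign}_\alpha\,\langle x_A|\hat C_A^{\alpha}C_A|0^A\rangle\,\langle x_B|\hat C_B^{\alpha}C_B|0^B\rangle .
\]
This is Eq.~\eqref{eq:exact-amp} at the subtree level. Inherited projectors on $Q_\tau$ are single-qubit, so they sit entirely on one side and pass into $\hat C_A$ or $\hat C_B$ unchanged. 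The decomposition carries no magnitude factor, and $\mathrm{sign}_\alpha$ is the only scalar the node applies, since Pauli signs such as the $\pm$ from CNOT or SWAP variants are pulled out. By the inductive hypothesis each child returns its unnormalized amplitude, so the node returns exactly the sum above.

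The main obstacle I expect is showing the two bookkeeping conventions are consistent. In particular, each projector's $\sqrt{p}$ must be charged exactly once, at the unique leaf whose qubit set contains qubit $a$, and it must be computed with respect to that leaf's local state rather than the global state $\rho^{(a)}$ used in the setup. I would argue that the leaf's $p$ is whatever its own postselection produces, so the telescoping lemma needs only local consistency, and no global reduced state enters. The $2^{-d/2}$ discrepancy described above is then precisely what you get by additionally multiplying the $\mathrm{mag}$ factors at internal nodes, which the rule forbids.
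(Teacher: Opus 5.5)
Your proposal is correct and follows the paper's own argument: induction on the partition tree, expanding each internal node through the exact rank-2 identity into Eq.~\eqref{eq:exact-amp}, applying the inductive hypothesis to both children, and multiplying only by $\mathrm{sign}_\alpha$ at the node. Your strengthened invariant (sub-circuits carrying inherited, unnormalized projectors) and the telescoping lemma for the leaf's postselection make explicit what the paper compresses into ``exact by definition'' and ``including the $\mathrm{mag}$-factors \ldots\ from upper-level cuts''; the one aside to drop is SWAP, which has operator Schmidt rank $4$ and so is not of the $\sum_\alpha P_a^\alpha \otimes V_b^\alpha$ form the setup assumes.
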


\begin{proof}
By induction on tree depth. The base case (single leaf) is the
direct amplitude evaluation, exact by definition. For the
inductive step at depth $D + 1$, the root expansion is
\eqref{eq:exact-amp} with $A_\alpha$ and $B_\alpha$ now computed
recursively on subtrees of depth $\le D$. By the inductive
hypothesis, $A_\alpha = \langle x_A | \hat C_A^\alpha C_A | 0^A
\rangle$ exactly, and analogously for $B_\alpha$, including the
$\mathrm{mag}$-factors from any projectors in $\hat C_A^\alpha$
that arose from upper-level cuts. Multiplying by
$\mathrm{sign}_\alpha$ at the root and summing over $\alpha$
reproduces $\langle x | C | 0^n \rangle$ via
\eqref{eq:exact-amp}.
\end{proof}

\paragraph{Implementation.}
The reference implementation in \texttt{src/qcut/} replaces the
$\mathrm{ph}_\alpha \cdot a_A \cdot a_B$ multiplication at
Algorithm~\ref{alg:rec} line~10 with
$\mathrm{sign}_\alpha \cdot a_A \cdot a_B$ at internal nodes; the
$\mathrm{mag}_\alpha$ component is left to the leaf oracle. The
correctness rule is validated against monolithic stabilizer-frame
and dense state-vector simulations: all 81 structured circuits
of Table~\ref{tab:hier} agree to $10^{-5}$, with measured deviation
distributed around the floating-point round-off floor rather than
at any $2^{-d/2}$ multiplicative offset.

\end{appendices}

\end{document}